\newenvironment{varalgorithm}[1]
  {\algorithm}
  {\endalgorithm}
\newenvironment{list3}{
	\begin{list}{$\bullet$}{%
			\setlength{\itemsep}{0.05cm}
			\setlength{\labelsep}{0.2cm}
			\setlength{\labelwidth}{0.3cm}
			\setlength{\parsep}{0in} 
			\setlength{\parskip}{0in}
			\setlength{\topsep}{0in} 
			\setlength{\partopsep}{0in}
			\setlength{\leftmargin}{0.22in}}}
	{\end{list}}
\newenvironment{list4}{
	\begin{list}{$\bullet$}{%
			\setlength{\itemsep}{0.05cm}
			\setlength{\labelsep}{0.2cm}
			\setlength{\labelwidth}{0.3cm}
			\setlength{\parsep}{0in} 
			\setlength{\parskip}{0in}
			\setlength{\topsep}{0in} 
			\setlength{\partopsep}{0in}
			\setlength{\leftmargin}{0.16in}}}
	{\end{list}}
\newenvironment{list4a}{
	\begin{list}{$\bullet$}{%
			\setlength{\itemsep}{0.05cm}
			\setlength{\labelsep}{0.2cm}
			\setlength{\labelwidth}{0.3cm}
			\setlength{\parsep}{0in} 
			\setlength{\parskip}{0in}
			\setlength{\topsep}{0in} 
			\setlength{\partopsep}{0in}
			\setlength{\leftmargin}{0.16in}}}
	{\end{list}}
\newenvironment{list5}{
	\begin{list}{$\bullet$}{%
			\setlength{\itemsep}{0.05cm}
			\setlength{\labelsep}{0.2cm}
			\setlength{\labelwidth}{0.3cm}
			\setlength{\parsep}{0in} 
			\setlength{\parskip}{0in}
			\setlength{\topsep}{0in} 
			\setlength{\partopsep}{0in}
			\setlength{\leftmargin}{0.18in}}}
	{\end{list}}
\let\mathbb=\mathds 
\def\day{\mathrm{day}} 
\newtheorem{theorem}{Theorem}
\newtheorem{defn}{Definition}
\newtheorem{prop}{Proposition}
\newtheorem{remark}{Remark}
\newtheorem{lemma}{\bfseries Lemma}
\begin{document}

\title{\LARGE \bf Distributed Event-Triggered  Algorithms for \\ Finite-Time Privacy-Preserving Quantized Average Consensus}

\author{Apostolos~I.~Rikos, Themistoklis Charalambous, Karl~H.~Johansson, and Christoforos~N.~Hadjicostis
\thanks{Apostolos~I.~Rikos and K.~H.~Johansson are with the Division of Decision and Control Systems, KTH Royal Institute of Technology, SE-100 44 Stockholm, Sweden. E-mails: {\tt \{rikos,kallej\}@kth.se}.}
\thanks{T. Charalambous is with the Department of Electrical Engineering and Automation, Aalto University, 02150 Espoo, Finland.  E-mail:{\tt~themistoklis.charalambous@aalto.fi}.}
\thanks{C.~N.~Hadjicostis is with the Department of Electrical and Computer Engineering, University of Cyprus, 1678 Nicosia, Cyprus: E-mail:{\tt~chadjic@ucy.ac.cy}.}
\thanks{Preliminary results of this work were presented at the 2020 IEEE Conference on Decision and Control \cite{2020:Rikos_Privacy_CDC}. 
We extend these results by (i) proposing a second privacy-preserving algorithm and deriving its required topological conditions, (ii) extending the simulations, where we also compare the two proposed algorithms, and (iii) presenting an application for computing the power requests in smart grids in a privacy-preserving manner.}
\thanks{This work was supported in part by the European Union's Horizon 2020 research and innovation program under grant agreement No~739551 (KIOS~CoE).}
}

\maketitle
\thispagestyle{empty}
\pagestyle{empty}

%
%
%
%
\begin{abstract}

In this paper, we consider the problem of privacy preservation in the average consensus problem when communication among nodes is quantized. 
More specifically, we consider a setting where some nodes in the network are curious but not malicious and they try to identify the initial states of other nodes based on the data they receive during their operation (without interfering in the computation in any other way), while some nodes in the network want to ensure that their initial states cannot be inferred exactly by the curious nodes.  
We propose two privacy-preserving event-triggered quantized average consensus algorithms that can be followed by any node wishing to maintain its privacy and not reveal the initial state it contributes to the average computation. 
Every node in the network (including the curious nodes) is allowed to execute a privacy-preserving algorithm or its underlying average consensus algorithm. 
In the first algorithm, each node initially injects a quantized offset and continues injecting offsets every time a certain event-triggered condition is satisfied, such that, after a finite number of events, the accumulated injected offset becomes equal to zero. 
In the second algorithm, each node injects a quantized offset to its out-neighboring nodes at the algorithm's initialization, such that the accumulated sum is equal to zero. 
Under certain topological conditions, both algorithms allow the nodes who adopt privacy-preserving protocols to preserve the privacy of their initial quantized states and at the same time to obtain, after a finite number of steps, the exact average of the initial states while processing and transmitting \textit{quantized} information. 
Illustrative examples demonstrate the validity and performance of our proposed algorithms. 
A motivating application is presented in which smart meters in a smart grid collect real-time demands for power in a neighborhood and the aggregated demand is distributively computed in a privacy-preserving manner.
\end{abstract}

\begin{IEEEkeywords} 
Privacy preserving average consensus, quantized communication, finite-time convergence.
\end{IEEEkeywords}

%
%
%
%
\section{Introduction}\label{intro}


A problem of particular interest in distributed control is the \textit{consensus} problem, in which nodes communicate locally with other nodes under constraints on connectivity \cite{2004:Murray}. 
In distributed averaging (a special case of the consensus problem), each node that is initially endowed with a numerical state, which it updates in an iterative fashion by sending/receiving information to/from other neighboring nodes, eventually computes the average of all initial states. 
Average consensus 
has been studied extensively in settings where each node processes and transmits real-valued states with infinite precision; see, for example, \cite{2018:BOOK_Hadj} and references therein.

The case where capacity-limited network links can only allow messages of certain length to be transmitted between nodes has also received significant attention recently, as it effectively extends techniques for average consensus towards quantized consensus. 
Quantized processing and communication is better suited to the available network resources (e.g., physical memories of finite capacity and digital communication channels of limited data rate), while it also exhibits other advantages such as amenability to security and privacy enhancements \cite{2019:Alexandru}. 
For example, public-key cryptosystems require integer numbers to operate with, since non-quantized consensus algorithms would be subject to quantization errors in the final result \cite{1999:Paillier, Ruan:2019, hadjicostis2020privacy}. 
For these reasons, several probabilistic and deterministic strategies have been proposed for solving the quantized average consensus problem
\cite{2007:Aysal_Rabbat, 2012:Lavaei_Murray , 2011:Cai_Ishii, 2016:Chamie_Basar, 2020:Rikos_Mass_Accum}.

Average consensus algorithms require each node to exchange and disclose state information to its neighbors. 
This may be undesirable in case the state of some nodes is private or contains sensitive information. 
Additionally, in many occasions there might be nodes in the network that are curious and aim to extract private and/or sensitive information.
In many emerging applications (e.g., health care and opinion forming over social networks) preserving the privacy of participating components is necessary for enabling cooperation between nodes without requiring them to disclose sensitive information. 
There have been different approaches for dealing with privacy preservation in such systems. 
For example, \cite{Kefayati:2007} proposed a method in which each node wishing to protect its privacy adds a random offset value to its initial state, thus ensuring that its true state will not be revealed to curious nodes that might be observing the exchange of data in the network. 
The main idea is based upon the observation that when a large number of nodes employ the protocol, the sum of their offsets will be essentially zero and therefore the nodes will converge to the true average state of the network. 
A related line of research is based on differential privacy \cite{2016:CortesPappas, NOZARI:2017}, in which nodes inject uncorrelated noise into the exchanged messages so that the data associated to a particular node cannot be inferred by a curious node during the execution of the algorithm. 
However, the exact average state is not eventually obtained due to the induced trade-off between privacy and computational accuracy \cite{NOZARI:2017}. 
To overcome this trade-off and guarantee convergence to the exact average, the injection of correlated noise at each time step and for a finite period of time was proposed in \cite{2013:Nikolas_Hadj}, thus allowing a node to avoid revealing its own initial state or the initial states of other nodes. 
Once this period of time ends, each node ensures that the accumulated sum of offsets it added in the iterative computation is removed. 
In \cite{Mo-Murray:2017}, the nodes asymptotically subtract the initial offset values they added in the computation while in \cite{2017:Gupta} each node masks its initial state with an offset such that the sum of the offsets of each node is zero, thus guaranteeing convergence to the average. 
Another approach that guarantees privacy preservation is via homomorphic encryption \cite{ChrisCDC:2018, Ruan:2019, hadjicostis2020privacy}. 
However, this approach requires the existence of trusted nodes and imposes heavier computational requirements on the nodes.

The main contributions of this paper are the following:
\begin{list4}
\item[i)] two novel distributed algorithms which achieve quantized average consensus under privacy constraints that converge after a finite number of time steps, and 
\item[ii)] their application to power request in smart grids under privacy-preserving guarantees.
\end{list4}

\subsection{Distributed quantized average consensus algorithms}

During its operation, each node that would like to protect its privacy from other curious (but not malicious) nodes follows one of the two finite-time event-triggered quantized average consensus protocols. 
The privacy preserving algorithms essentially involve adding and subtracting offsets to each node's state in two different ways: 
\\ \noindent {\em 1. The first algorithm injects offsets according to an event-based strategy for a predefined number of steps.}
Specifically, when the token that triggers action arrives at a specific node for the first time, the node adds a substantial negative \emph{quantized} offset to its initial state. 
This initial offset is determined by the node at the first triggering, and is gradually removed at later triggerings (when certain conditions are satisfied) ensuring that the total accumulated sum of injected offsets is canceled out.
\\ \noindent {\em 2. The second algorithm injects offsets only during the initialization procedure.}
Specifically, each node injects a quantized offset to the states of its out-neighboring nodes only during the initialization procedure. 
Then, the node injects to its own state a (possibly) nonzero offset (in addition to any offsets injected by its in-neighbors) such that the accumulated sum of the injected offsets is equal to zero, and proceeds with executing a finite-time event-triggered quantized average consensus protocol.


We show that both algorithms converge after a finite number of time steps. We also present the topological conditions that ensure privacy for the nodes that follow the proposed protocols. Then, we present numerical simulations in which we demonstrate and compare the operation of the two algorithms over random digraphs.

Note here that the algorithms presented in this paper build on the algorithms introduced in \cite{2013:Nikolas_Hadj, Mo-Murray:2017}. 
However, unlike other privacy preserving protocols proposed in the literature (e.g., \cite{Kefayati:2007, 2013:Nikolas_Hadj, Mo-Murray:2017, 2019:Allerton_themis}), the proposed algorithms take advantage of their finite time operation since the added offsets are integers. 
As a result, consensus to the \textit{exact} average of the initial states is achieved after a finite number of steps, while the error, introduced from the offset, vanishes completely. 

\subsection{Application: Power Request in Smart Grids under Privacy-Preserving Guarantees}
\label{app}

Smart grids are considered as the next-generation power supply networks \cite{2014:Chao}. 
In this application, a neighborhood of interconnected households is able to request the \emph{total} demanded power from a smart meter in a privacy-preserving manner. 
One of the main characteristics of smart grids is that the power generator produces electricity based on consumers requests which are generated in real-time and collected by smart meters. 
Real time power demand data may contain patterns from daily/weekly life schedule. 
Since potential leakage of this sensitive information may lead to malicious situations against the residents of specific households (e.g., the probability that thieves will attempt breaking into the household may increase), it is essential to preserve the privacy of the data sent to smart meters (which contains each household's daily requested power). 

During the operation of smart grids we have the following sequence of actions: i) the smart meter collects the daily demands from each household (and possibly electric cars in charging lots) and transmits them to the power generator; ii) having received the demands the power generator produces and delivers the demanded electricity to each substation in the corresponding region; iii) the electricity is claimed from the substation directly, which delivers the demanded electricity to the households without any other entities having access to this transaction. 
The charging of the demanded electricity can be communicated at the end of the month from the substations.

As an example, let us consider in Fig.~\ref{app_fig} a neighborhood with $8$ households denoted as $\mathcal{B} = \{ b_1, b_2, ..., b_8 \}$, a smart meter denoted as $v_{\mathrm{sm}}$, a substation denoted as $v_{\mathrm{Sub}}$ and a power generator denoted as $v_{\mathrm{PGen}}$. During the operation of the proposed algorithms the smart meter $v_{\mathrm{sm}}$ collects (through say $b_1$) 
the state variable of household $b_1$ which is equal to the average of the daily demanded power from each household in the neighborhood.
Then, it multiplies it with the number of houses in the neighborhood in order to calculate the total demanded power and it transmits the total demanded power to the power generator. 

\begin{figure}[!h]
\begin{center}
\includegraphics[width=0.775\columnwidth]{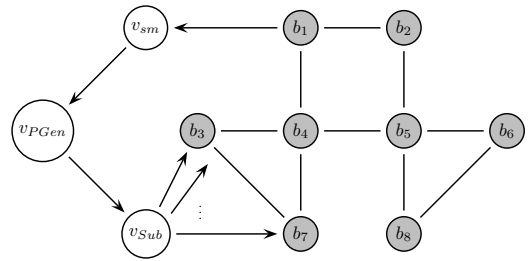}
\caption{Example of a digraph representing a smart grid consisting of a neighborhood with $8$ households $b_1 - b_8$, a smart meter $v_{\mathrm{sm}}$, a substation $v_{Sub}$, and a power generator $v_{\mathrm{PGen}}$.}
\label{app_fig}
\end{center}
\end{figure}

For a description of how our algorithms have been applied in this context, please refer to Appendix \ref{app_analysis}. 

\subsection{Organization of the paper}

The rest of the paper is organized as follows. 
In Section~\ref{app}, we  present a motivating application on power requests in smart grids. 
In Section~\ref{sec:preliminaries}, we review necessary notation and background, while in Section~\ref{sec:probForm} we provide the problem formulation. 
In Section \ref{sec:distr_algo} we present our first privacy strategy along with the corresponding distributed algorithm. 
Furthermore, we analyze the convergence of our algorithm while we present sufficient topological conditions that ensure privacy preservation. 
In Section \ref{sec:distr_algo2} we present our second privacy strategy and the corresponding distributed algorithm while, we analyze its convergence and we present sufficient topological conditions that ensure privacy preservation. 
In Section~\ref{sec:results} we demonstrate our strategies via illustrative examples.
In Section~\ref{sec:conclusions} we draw concluding remarks and discuss future directions. 
Finally, the proofs of lemmas and theorems are provided in appendices.

%
%
%
%
\section{Notation and Preliminaries}\label{sec:preliminaries}

\subsection{Notation}

The sets of real, rational, integer, and natural numbers are denoted by $ \mathbb{R}, \mathbb{Q}, \mathbb{Z}$, and $\mathbb{N}$, respectively. 
The symbols $\mathbb{Z}_{\geq 0}$ ($\mathbb{Z}_{>0}$) and $\mathbb{Z}_{\leq 0}$ ($\mathbb{Z}_{<0}$) denote the sets of nonnegative (positive) and nonpositive (negative) integers respectively. 
Vectors are denoted by small letters whereas matrices are denoted by capital letters. 
The transpose of a matrix $A$ is denoted by $A^T$. For $A\in \mathbb{R}^{n\times n}$, $A_{ij}$ denotes the entry at row $i$ and column $j$. By $\mathbf{1}$ we denote the all-ones vector and by $I$ we denote the identity matrix (of appropriate dimensions).

\subsection{Graph Theory}

Consider a network of $n$ ($n \geq 2$) nodes communicating only with their immediate neighbors. 
The communication topology can be captured by a directed graph (digraph), called \textit{communication digraph}. 
A digraph is defined as $\mathcal{G}_d = (\mathcal{V}, \mathcal{E})$, where $\mathcal{V} =  \{v_1, v_2, \dots, v_n\}$ with cardinality $n  = | \mathcal{V} | \geq 2 $ is the set of nodes and $\mathcal{E} \subseteq \mathcal{V} \times \mathcal{V} - \{ (v_j, v_j) \ | \ v_j \in \mathcal{V} \}$ is the set of edges (self-edges excluded) whose cardinality is denoted as $m = | \mathcal{E} |$. 
A directed edge from node $v_i$ to node $v_j$ is denoted by $m_{ji} \triangleq (v_j, v_i) \in \mathcal{E}$, and captures the fact that node $v_j$ can receive information from node $v_i$ (but not the other way around). 
We assume that the given digraph $\mathcal{G}_d = (\mathcal{V}, \mathcal{E})$ is \textit{strongly connected} (i.e., for each pair of nodes $v_j, v_i \in \mathcal{V}$, $v_j \neq v_i$, there exists a directed \textit{path}\footnote{A directed \textit{path} from $v_i$ to $v_j$ exists if we can find a sequence of nodes $v_i \equiv v_{l_0},v_{l_1}, \dots, v_{l_t} \equiv v_j$ such that $(v_{l_{\tau+1}},v_{l_{\tau}}) \in \mathcal{E}$ for $ \tau = 0, 1, \dots , t-1$.} from $v_i$ to $v_j$). 
The subset of nodes that can directly transmit information to node $v_j$ is called the set of in-neighbors of $v_j$ and is represented by $\mathcal{N}_j^- = \{ v_i \in \mathcal{V} \; | \; (v_j,v_i)\in \mathcal{E}\}$, while the subset of nodes that can directly receive information from node $v_j$ is called the set of out-neighbors of $v_j$ and is represented by $\mathcal{N}_j^+ = \{ v_l \in \mathcal{V} \; | \; (v_l,v_j)\in \mathcal{E}\}$. 
The cardinality of $\mathcal{N}_j^-$ is called the \textit{in-degree} of $v_j$ and is denoted by $\mathcal{D}_j^-$ (i.e., $\mathcal{D}_j^- = | \mathcal{N}_j^- |$), while the cardinality of $\mathcal{N}_j^+$ is called the \textit{out-degree} of $v_j$ and is denoted by $\mathcal{D}_j^+$ (i.e., $\mathcal{D}_j^+ = | \mathcal{N}_j^+ |$).

\subsection{Node Operation}\label{subsec:not_priv_alg}

With respect to quantization of information flow, we have that at time step $k \in \mathbb{Z}_{\geq 0}$ each node $v_j \in \mathcal{V}$ maintains the state variables $y^s_j[k], z^s_j[k], q_j^s[k]$, where $y^s_j[k] \in \mathbb{Z}$, $z^s_j[k] \in \mathbb{N}$ and  $q_j^s[k] ={y_j^s[k]}/{z_j^s[k]}$, and the mass variables $y_j[k], z_j[k]$, where $y_j[k] \in \mathbb{Z}$ and $z_j[k] \in \mathbb{Z}_{\geq 0}$.
%
%
We assume that each node is aware of its out-neighbors and can directly transmit messages to each of them. However, it cannot necessarily receive messages (at least not directly) from them. 
In the proposed distributed protocols, each node $v_j$ assigns a \textit{unique order} in the set $\{0,1,..., \mathcal{D}_j^+ -1\}$ to each of its outgoing edges $m_{lj}$, where $v_l \in \mathcal{N}^+_j$. 
More specifically, the order of link $(v_l,v_j)$ for node $v_j$ is denoted by $P_{lj}$ (such that $\{P_{lj} \; | \; v_l \in \mathcal{N}^+_j\} = \{0,1,..., \mathcal{D}_j^+ -1\}$). 
This unique predetermined order is used during the execution of the proposed distributed algorithm as a way of allowing node $v_j$ to transmit messages to its out-neighbors in a \textit{round-robin}\footnote{When executing the protocol, each node $v_j$ transmits to its out-neighbors, one at a time, by following the predetermined order. The next time it transmits to an out-neighbor, it continues from the outgoing edge it stopped the previous time and cycles through the edges in a round-robin fashion, according to their order.} fashion.

\subsection{Quantized Averaging via Deterministic Mass Summation}\label{Prel_Aver}

The objective of quantized average consensus problems is the development of distributed algorithms which allow nodes to process and transmit quantized information, so that they have short communication packages and eventually obtain, after a finite number of steps, a fraction $q^s$ which is equal to the \textit{exact} average of the initial quantized states of the nodes.

Following the recently proposed method in \cite{2020:Rikos_Mass_Accum}, we assume that each node $v_j$ in the network has a quantized\footnote{Following \cite{2007:Basar, 2011:Cai_Ishii} we assume that the state of each node is integer valued. 
This abstraction subsumes a class of quantization effects (e.g., uniform quantization).} initial state $y_j[0] \in \mathbb{Z}$. 
At each time step $k$, each node $v_j \in \mathcal{V}$ maintains its mass variables $y_j[k] \in \mathbb{Z}$ and $z_j[k] \in \mathbb{Z}_{\geq 0}$, and its state variables $y^s_j[k] \in \mathbb{Z}$, $z^s_j[k] \in \mathbb{N}$ and  $q_j^s[k] = {y_j^s[k]}/{z_j^s[k]}$. 
It updates the values of the mass variables as 
\begin{subequations}\label{Prel_Aver_YZ}
\begin{align}
y_j[k+1] = y_j[k] + \sum_{v_i \in \mathcal{N}_j^-} \mathds{1}_{ji}[k] y_i[k] , \label{subeq:1a} \\
z_j[k+1] = z_j[k] + \sum_{v_i \in \mathcal{N}_j^-} \mathds{1}_{ji}[k] z_i[k] , \label{subeq:1b}
\end{align}
\end{subequations}
where  
\begin{align*}
\mathds{1}_{ji}[k] = 
\begin{cases}
1, & \text{if a message is received at $v_j$ from $v_i$ at $k$,} \\[0.1cm]
0, & \text{otherwise.} 
\end{cases}
\end{align*}
If \emph{any} of the following event-triggered conditions: 
\begin{list3}\label{tr_cond}
\item[(C1):] $z_j[k+1] > z^s_j[k]$, 
\item[(C2):] $z_j[k+1] = z^s_j[k]$ and $y_j[k+1] \geq y^s_j[k]$, 
\end{list3}
is satisfied, node $v_j$ updates its state variables as follows: 
\begin{subequations}\label{State_Prel_Aver_YZ}
\begin{align}
z^s_j[k+1] &= z_j[k+1], \\
y^s_j[k+1] &= y_j[k+1], \\
q^s_j[k+1] &= \frac{y^s_j[k+1]}{z^s_j[k+1]} .
\end{align}
\end{subequations}
Then, it transmits its mass variables $y_j[k+1]$, $z_j[k+1]$ to an out-neighbor $v_l \in \mathcal{N}^+_j$ chosen according to the unique order it assigned to its out-neighbors during initialization and sets its mass variables equal to zero (i.e., $y_j[k+1] = 0$ and $z_j[k+1] = 0$).

\begin{defn}\label{Definition_Quant_Av}
The system is able to achieve quantized average consensus if, for every $v_j \in \mathcal{V}$, there exists $k_0$ so that for every $k \geq k_0$ we have 
\begin{equation}\label{alpha_z_y}
y^s_j[k] = \frac{\sum_{l=1}^{n}{y_l[0]}}{\alpha}  \ \ \text{and} \ \ z^s_j[k] = \frac{n}{\alpha} ,
\end{equation}
for some $\alpha \in \mathbb{N}$. This means that 
\begin{equation}\label{alpha_q}
q^s_j[k] = \frac{(\sum_{l=1}^{n}{y_l[0]}) / \alpha}{n / \alpha} = \frac{\sum_{l=1}^{n}{y_l[0]}}{n} =:  \overline{y} ,
\end{equation}
i.e., for $k \geq k_0$ every node $v_j$ has calculated $\overline{y}$ as the ratio of two integer values. 
\end{defn}

The following result from \cite{2020:Rikos_Mass_Accum} provides a worst case upper bound regarding the number of time steps required for quantized averaging to be achieved.

\begin{theorem}[\hspace{-0.00001cm}\cite{2020:Rikos_Mass_Accum}]
\label{Conver_Quant_Av}
The iterations in (\ref{Prel_Aver_YZ}) and (\ref{State_Prel_Aver_YZ}) allow the set of nodes to reach quantized average consensus (i.e., state variables of each node $v_j \in \mathcal{V}$ fulfil (\ref{alpha_z_y}) and (\ref{alpha_q})) after a finite number of steps $\mathcal{S}_t$, bounded by $\mathcal{S}_t \leq nm^2$, where $n$ is the number of nodes and $m$ is the number of edges in the network. 
\end{theorem}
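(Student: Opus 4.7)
The plan is to combine a mass-conservation argument with a combinatorial bound on how long it takes for the circulating ``tokens'' (the nonzero mass packets being transmitted) to coalesce into a single token, after which the event-triggered rule \eqref{State_Prel_Aver_YZ} forces every node's state variable to the exact average.

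First, I would establish conservation of mass. Summing \eqref{subeq:1a} and \eqref{subeq:1b} over all $v_j\in\mathcal{V}$, every transmission contributes $+y_i[k]$ (resp.\ $+z_i[k]$) to the receiver and is cancelled by the sender setting its mass to zero right after transmission; hence $\sum_{v_j\in\mathcal{V}} y_j[k]$ and $\sum_{v_j\in\mathcal{V}} z_j[k]$ are invariant in $k$. Under the natural initialization $z_j[0]=1$ for all $j$, this yields $\sum_j z_j[k]=n$ and $\sum_j y_j[k]=\sum_j y_j[0]$ for every $k$.

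Second, I would view the dynamics as a system of at most $n$ ``tokens'' circulating in the digraph: a token at node $v_j$ carries the current pair $(y_j[k],z_j[k])$, is forwarded at the next step to the out-neighbor indicated by $v_j$'s round-robin pointer $P_{\cdot j}$, and merges additively with any token already present at the destination. Because each node $v_j$ cycles through its $\mathcal{D}_j^+\le m$ out-edges, within any window of $m$ consecutive visits $v_j$ exhausts all outgoing directions. Combined with strong connectivity of $\mathcal{G}_d$, this can be used to prove the key combinatorial lemma: any two distinct tokens must meet and merge within $O(m^2)$ steps. Iterating this pairwise merger at most $n-1$ times then bounds the coalescence time of all tokens into a single token by $nm^2$.

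Third, once a single token carrying $(\sum_j y_j[0],\,n)$ is the only nonzero mass in the network, I would argue that from that step onward, whenever it arrives at some node $v_j$, its $z$-component $n$ is at least as large as the previously stored $z_j^s[\cdot]\le n$, so that condition (C1) or (C2) of Section~\ref{subsec:not_priv_alg} is triggered. The update \eqref{State_Prel_Aver_YZ} then sets $q_j^s=\sum_j y_j[0]/n=\overline{y}$, which is exactly \eqref{alpha_z_y}--\eqref{alpha_q} with $\alpha=1$. Since the lone token continues to traverse the strongly connected digraph under round-robin, every node is visited within another $O(m)$ steps, so the total step count remains within the $nm^2$ budget.

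The main obstacle is the combinatorial step: proving the $O(m^2)$ pairwise meeting time under round-robin routing in an arbitrary strongly connected digraph. A natural approach is to track, for each pair of tokens, the joint state consisting of their current positions together with the round-robin pointers at those positions, and to show that this joint state cannot avoid a collision for more than $m^2$ steps without contradicting strong connectivity (e.g., by exhibiting a directed path along which the chasing token is forced to follow because the pursued token keeps exhausting its out-edges). Sharpening this to the exact constant $nm^2$ — rather than a larger polynomial — is where the real work lies; mass conservation, the monotone-update property of the state variables, and post-coalescence propagation then follow directly from the update rules.
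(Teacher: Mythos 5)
Your overall architecture (mass conservation, token coalescence, post-coalescence propagation) matches the structure of the cited proof, but two of your central claims do not survive scrutiny, and you yourself flag the combinatorial step as unproven. First, the claim that ``any two distinct tokens must meet and merge within $O(m^2)$ steps'' is false as stated: two identical tokens moving in lockstep under round-robin routing can circulate forever without meeting. This is precisely why the theorem's conclusion is phrased with a parameter $\alpha\in\mathbb{N}$ in \eqref{alpha_z_y}--\eqref{alpha_q} rather than asserting that a single token carrying $(\sum_j y_j[0],\,n)$ emerges; the correct dichotomy (visible in the paper's proof of Theorem~\ref{Privacy_Conver_Quant_Av} in Appendix~\ref{proof:theorem2}) is that either a merger occurs within $m^2$ steps, or all surviving nonzero masses are already equal ``leading masses'' carrying $(\sum_l y_l[0]/\alpha,\, n/\alpha)$, in which case consensus is already reachable without further mergers. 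Your third step, which assumes full coalescence to a unique token with $z$-component $n$, therefore proves a stronger statement than is true.

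Second, your token model asserts that every token ``is forwarded at the next step to the out-neighbor indicated by the round-robin pointer,'' which ignores the event-triggered gating: a mass is transmitted only when (C1) or (C2) holds, and an obstructed mass (one arriving at a node whose stored state variables dominate it) stays put. This gating is not a technicality --- it is the mechanism that makes mergers happen at all. The argument needed is: the \emph{leading mass} (maximal $z$, and among those maximal $y$, as in \eqref{great_z_prop1_det_1}--\eqref{great_z_prop2_det_2}) always satisfies the trigger conditions (Lemma~\ref{first_lemma}), hence keeps moving and visits every node within $m^2$ steps, absorbing any obstructed mass it encounters; a mass that fails to be leading eventually stalls and waits to be absorbed. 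Without distinguishing moving from stalled tokens, the pursuit argument you sketch (two always-moving tokens under round-robin) genuinely cannot close, so the $m^2$ per-merger bound and hence the $nm^2$ total remain unestablished in your proposal.
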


%
%
%
%
\section{Problem Formulation}\label{sec:probForm}

Consider a strongly connected digraph $\mathcal{G}_d = (\mathcal{V}, \mathcal{E})$, where each node $v_j \in \mathcal{V}$ has an initial quantized state $y_j[0]$ (for simplicity, we take $y_j[0] \in \mathbb{Z}$). 
Nodes require to calculate 
\begin{equation}\label{init_average}
\overline{y} = \frac{\sum_{j=1}^n y_j[0]}{n}
\end{equation}
in a distributed way, exclusively through local exchange of information. 
The information exchange takes place only between nodes that are neighbors with respect to $\mathcal{G}_d$, which represents the system communication architecture. 
The node set $\mathcal{V}$ is partitioned into three subsets: 1) a subset of nodes $v_j \in \mathcal{V}_p \subset \mathcal{V}$, that wish to preserve their privacy by not revealing their initial states $y_j[0]$ to other nodes, 2) another other subset of nodes $v_c \in \mathcal{V}_c \subset \mathcal{V}$ that are curious and try to identify the initial states $y[0]$ of all or a subset of nodes in the network, and 3) the rest of the nodes  $v_i \in \mathcal{V}_n \subset \mathcal{V}$ that neither wish to preserve their privacy nor identify the states of any other nodes.
An example of such a partition is shown in Fig.~\ref{prob_form_graph}.

\begin{figure}[h]
\begin{center}
\includegraphics[width=0.5\columnwidth]{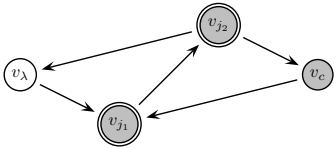}
\caption{Example of a digraph with the different types of nodes in the network: nodes $v_{j_1}, v_{j_2} \in \mathcal{V}_p$ that wish to preserve their privacy, node $v_{c} \in \mathcal{V}_c$ that is curious and wishes to identify the initial states of other nodes in the network, and node $v_{\lambda} \in \mathcal{V}_{n}$ that is neither curious nor wishes to preserve its privacy.}
\label{prob_form_graph}
\end{center}
\end{figure}

The concept of privacy is typically defined as the ability of an individual node to seclude itself or hide information about itself, and thereby express itself selectively. 
In our case, we consider that the information of interest for each node is its initial state $y_j[0]$. 
The notion of privacy that we adopt aims to ensure that the state $y_j[0]$ cannot be inferred exactly by curious nodes and relates to notions of possible innocence in theoretical computer science \cite{1998:Reiter, 2006:Chatzikokolakis} in the sense that there is some uncertainty about $y_j[0]$.

\begin{defn}\label{Definition_Quant_Privacy}
A node $v_j$ is said to preserve the privacy of its initial state $y_j[0] \in \mathbb{Z}$ if the state cannot be inferred exactly by curious nodes at any point during the operation of the protocol. 
\end{defn}

The problem we consider in this paper is to develop a strategy for nodes $v_j \in \mathcal{V}_p$ that wish to prevent their privacy (i.e., not reveal their initial states $y_j[0]$ to other nodes) when they exchange quantized information with neighboring nodes for calculating $\overline{y}$ in \eqref{init_average}. Note that this strategy should allow nodes to operate seamlessly along with those that use the underlying average consensus algorithm described in Section~\ref{subsec:not_priv_alg} that is not privacy-preserving. 
As aforementioned, curious nodes $v_c \in \mathcal{V}_c$ try to identify the initial states $y[0]$ of other nodes $v_j' \in \mathcal{V}_p \cup \mathcal{V}_{n}$ but {\em do not interfere} in the computation in any other way as they execute either a privacy-preserving strategy or its underlying average consensus algorithm in Section~\ref{subsec:not_priv_alg}. 
We also assume that curious nodes $v_c$ may collaborate arbitrarily and that they know the predefined algorithm followed by nodes that would like to preserve their privacy, and the topology of the network, but not the actual parameters chosen by the nodes $v_j\in \mathcal{V}_p$ that want to preserve their privacy. 
Finally, we assume that nodes $v_i \in \mathcal{V}_{n}$ simply execute the underlying average consensus algorithm in Section~\ref{subsec:not_priv_alg}.



\begin{remark}
Definition~\ref{Definition_Quant_Privacy} implies that if one looks at the set of variables that are a priori unknown to the curious nodes (e.g., the initial states of nodes, offsets chosen by other nodes, etc.), then one can find at least two different sets of values for these variables that match the observations that become available to the curious nodes (including the eventual knowledge of the average of the initial states of the nodes), such that the node that wants to preserve its privacy exhibits different initial states in these two sets.
\end{remark}

\vspace{-.5cm}

%
%
%
%
\section{Event-Based Offset Privacy-Preserving Strategy}
\label{sec:distr_algo}


\subsection{Initialization for Quantized Privacy-Preserving Strategy}

The primary objective in our system is to calculate $\overline{y}$ in \eqref{init_average} while preserving the privacy of at least the nodes following the protocol. 
Our strategy is based on the event-triggered deterministic algorithm \eqref{Prel_Aver_YZ}--\eqref{State_Prel_Aver_YZ} with some modifications (since the event-triggered deterministic algorithm \eqref{Prel_Aver_YZ}--\eqref{State_Prel_Aver_YZ} is not privacy- preserving). 
The main difference is that a mechanism is deployed that incorporates an offset to the mass variable of each node $v_j \in \mathcal{V}_p$, effectively preserving the privacy of its initial state $y_j[0]$. 

In previous works (see, for example,\cite{Kefayati:2007, 2013:Nikolas_Hadj, Mo-Murray:2017, 2019:Allerton_themis} and references therein) node $v_j$ sets its initial state to $\widetilde{y}_j[0] = y_j[0] + u_j$, where $u_j \in \mathbb{R}$. 
However, in this case, we require that the initial offset $u_j$ is a negative integer number, i.e., $u_j \in \mathbb{Z}_{<0}$, so that the event-triggered conditions (C1) and (C2) are guaranteed to lead to the calculation of the initial average after a finite number of time steps.  
Furthermore, each node $v_j$ maintains the privacy values $u_j[k] \in \mathbb{Z}_{\geq 0}$, the offset adding steps $L_j \in \mathbb{N}$, the offset adding counter $l_j \in \mathbb{N}$ and its transmission counter $c_j \in \mathbb{N}$.
The absolute value of the initial offset $u_j$ and the number of offset adding steps $L_j$ need to be chosen to be greater than the number of out-neighbors $\mathcal{D}_j^+$ of node $v_j$. 
Specifically, at initialization, each node $v_j$ chooses the number of steps $L_j$ and the offsets $u_j \in \mathbb{Z}_{<0}$, and $u_j[l_j] \in \mathbb{Z}_{\geq 0}$ for all $l_j \in \{ 0, 1, 2, ..., L_j \}$, to satisfy the following constraints:  
\begin{subequations}\label{Offset_value_1}
\begin{align}
L_j &\geq \mathcal{D}_j^+, \label{Offset_value_1c} \\
u_j &= - \sum_{l_j= 0}^{L_j} u_j[l_j], \label{Offset_value_1e} \\
u_j[l_j] &\geq 0, \ \forall \ l_j \in [0, L_j],\label{Offset_value_1b} \\
u_j[l_j] &= 0, \ \forall \ l_j \notin [0, L_j]. \label{Offset_value_1d} 
\end{align}
\end{subequations}
Constraints \eqref{Offset_value_1c}--\eqref{Offset_value_1d} are explicitly analyzed below: 
\begin{list4}
\item[1)] In \eqref{Offset_value_1c} the offset adding steps $L_j$ of every node $v_j$ need to be greater than or equal to node $v_j$'s out-degree so that every out-neighbor $v_i \in \mathcal{N}_j^+$ will receive at least one value of $u_j[l_j]$ from node $v_j$. 
As discussed in Section~\ref{subsec:conditions}, this is motivated by the privacy preservation guarantees.
\item[2)] Eq. \eqref{Offset_value_1e} means that the accumulated offset infused in the computation by node $v_j$ is equal to zero and the exact quantized average of the nodes' initial states can be calculated eventually without any error.
\item[3)] In \eqref{Offset_value_1b} the offset $u_j[l_j]$ which is injected to the network by each node $v_j$ each time its event-triggered conditions hold (i.e., for events $l_j \in [0, L_j]$) needs to be nonnegative so that (i) the event-triggered conditions (C1) and (C2) hold for every node after a finite number of steps and (ii) the exact quantized average of the initial states can be eventually calculated. 
\item[4)] Eq. \eqref{Offset_value_1d} means that node $v_j$ does not need to continue injecting nonzero offsets in the network so that exact quantized average of the initial states can be calculated without any error.
\end{list4}
The above choices imply that the initial offset $u_j$ every node $v_j$ injects in the network needs to be chosen so that, it is negative and satisfies $u_j \leq -\mathcal{D}_j^+$.
This is important to ensure that, during the operation of the proposed algorithm, the event-triggered conditions (C1) and (C2) hold for every node after a finite number of steps. 
If $u_j\geq 0$, the event-triggered conditions (C1) and (C2) may not hold and the proposed protocol may fail to calculate the average of the initial states.

\subsection{Algorithm Description}

The proposed algorithm is a quantized value transfer process in which every node in a strongly connected digraph $\mathcal{G}_d = (\mathcal{V}, \mathcal{E})$,  performs operations and transmissions according to a set of event-triggered conditions. 
The intuition behind the algorithm is as follows. Each node $v_j \in \mathcal{V}_p$ that would like to preserve its privacy performs the following steps: 
\begin{list4}
\item It initializes a counter $l_j$ to zero (i.e., $l_j=0$), and chooses the total number of offset adding steps $L_j$ such that $L_j \geq \mathcal{D}_j^+$ and the set of $(L_j+1)$ positive offsets $u_j[l_j]>0$ where $l_j \in \{0, 1, \ldots, L_j\}$.  
Finally it sets the initial negative offset $u_j$ that it injects to its initial state $y_j[0]$ to $u_j = - \sum_{l_j=0}^{L_j} u_j[l_j]$. 
For example, suppose that node $v_j$ has four out-neighbors. This means that it can choose $L_j = 6$, and then (randomly) set $u_j[0] = 1$, $u_j[1] = 3$, $u_j[2] = 2$, $u_j[3] = 4$, $u_j[4] = 1$, $u_j[5] = 2$, $u_j[6] = 5$; finally it sets $u_j = -18$.
\item It chooses an out-neighbor $v_l \in \mathcal{N}_j^+$ according to the unique order $P_{lj}$ (initially, it chooses $v_l \in \mathcal{N}_j^+$ such that $P_{lj}=0$) and transmits $z_j[0]$ and $\widetilde{y}_j[0] = y_j[0] + u_j$ to this out-neighbor. 
Then, it sets $\widetilde{y}_j[0] = 0$ and $z_j[0] = 0$.
\item During the execution of the algorithm, at every step $k$, node $v_j$ may receive a set of mass variables $\widetilde{y}_i[k]$ and $z_i[k]$ from each in-neighbor $v_i \in \mathcal{N}_j^-$. 
Then, node $v_j$ updates its state according to \eqref{subeq:1a}--\eqref{subeq:1b} (where in the sum of \eqref{subeq:1a} we use $\widetilde{y}_j[k]$) and checks whether any of its event-triggered conditions hold. 
If so, it injects an offset $u_j[l_j]$ to $y_j[k+1]$ and increases its offset increasing counter $l_j$ by one. 
Then, it sets its state variables $y^s_j[k+1]$ and $z^s_j[k+1]$ equal to $\widetilde{y}_j[k+1]=y_j[k+1]+u_j[l_j]$ and $z_j[k+1]$, respectively, and transmits them to an out-neighbor according to the predetermined order. 
If none of conditions \eqref{subeq:1a}--\eqref{subeq:1b} holds, node $v_j$ stores $y_j[k+1]$ and $z_j[k+1]$. 
If no message is received from any of the in-neighbors and no transmission takes place, the mass variables remain the same. 
\end{list4}

\noindent
The proposed algorithm is summarized in Algorithm~\ref{algorithm1}. 
Note here that curious nodes $v_c \in \mathcal{V}_c$ that try to identify the initial states of other nodes either follow Algorithm~\ref{algorithm1} or the underlying average consensus algorithm in Section~\ref{subsec:not_priv_alg}.

\begin{varalgorithm}{1}
\caption{Privacy-Preserving Event-Triggered Quantized Average Consensus with Event-Based Offset}
\noindent \textbf{Input:} A strongly connected digraph $\mathcal{G}_d = (\mathcal{V}, \mathcal{E})$ with $n=|\mathcal{V}|$ nodes and $m=|\mathcal{E}|$ edges. 
Each node $v_j\in \mathcal{V}$ has an initial state $y_j[0] \in \mathbb{Z}$. \\
\textbf{Initialization:} Each node $v_j \in \mathcal{V}_p$ does the following:
\begin{list4}
\item[1)] It assigns a \textit{unique order} $P_{lj}$ in the set $\{0,1,..., \mathcal{D}_j^+ -1\}$ to each of its out-neighbors $v_l \in \mathcal{N}^+_j$. 
\item[2)] It sets counter $c_j$ to $0$ and priority index $e_j$ to $c_j$.  
\item[3)] It sets counter $l_j$ to $0$, chooses $L_j \in \mathbb{N}$, where $L_j \geq \mathcal{D}_j^+$, and $u_j[k] \geq 0$ for $k \in \{0,1, \ldots, L_j\}$, and $u_j[k'] = 0$ for $k' > L_j$. 
It also sets $u_j = - \sum_{l_j=0}^{L_j} u_j[l_j]$. 
\item[4)] It sets $\widetilde{y}_j[0] = y_j[0] + u_j$, $z_j[0] = 1$, $z^s_j[0] = 1$ and $y^s_j[0] = \widetilde{y}_j[0]$ (which means that $q^s_j[0] = \widetilde{y}_j[0] / 1$). 
\item[5)] It selects out-neighbor $v_l \in \mathcal{N}_j^+$ such that $P_{lj} =e_j$ and transmits $z_j[0]$ and $\widetilde{y}_j[0]$ to this out-neighbor. Then, it sets $\widetilde{y}_j[0] = 0$ and $z_j[0] = 0$.
\item[6)] It sets $c_j = c_j + 1$ and $e_j = c_j \mod \mathcal{D}^+_j$.
\end{list4}
\textbf{Iteration:} For $k=0,1,2,\dots$, each node $v_j \in \mathcal{V}_p$, does the following:
\begin{list4}
\item \textbf{if} it receives $\widetilde{y}_i[k]$, $z_i[k]$ from at least one in-neighbor $v_i \in \mathcal{N}_j^-$ \textbf{then} it updates its values according to \eqref{subeq:1a}-\eqref{subeq:1b}.
\begin{list4a}
\item[$\circ$] \textbf{if} any of conditions (C1) and (C2) hold \textbf{then} 
\begin{list4a}
\item[-] it sets $\widetilde{y}_j[k+1] = u_j[l_j] + y_j[k+1]$ and $l_j \leftarrow l_j + 1$; 
\item[-] it sets $z^s_j[k+1] = z_j[k+1]$, $y^s_j[k+1] = \widetilde{y}_j[k+1]$ and $q^s_j[k+1] = \widetilde{y}_j[k+1] / z^s_j[k+1]$;
\item[-] it transmits $z_j[k+1]$ and $\widetilde{y}_j[k+1]$ to out-neighbor $v_{\lambda} \in \mathcal{N}_j^+$ for which $P_{\lambda j} = e_j$ and it sets $\widetilde{y}_j[k+1] = 0$, $y_j[k+1] = 0$ and $z_j[k+1] = 0$; 
\item[-] it sets $c_j = c_j + 1$ and $e_j = c_j \mod \mathcal{D}^+_j$. 
\end{list4a}
\item[$\circ$] \textbf{else} it stores $y_j[k+1]$ and $z_j[k+1]$.
\end{list4a}
\end{list4}
\textbf{Output:} \eqref{alpha_z_y} and \eqref{alpha_q} hold for every $v_j \in \mathcal{V}$. 
\label{algorithm1}
\end{varalgorithm}

\begin{remark}
Unlike other privacy preserving protocols proposed in the literature (see, e.g., \cite{Kefayati:2007, 2013:Nikolas_Hadj,Mo-Murray:2017, 2019:Allerton_themis}), the proposed strategy takes full advantage of the algorithm's finite time nature which means that consensus to the average of the initial states is reached after a finite number of iterations, while the error, introduced via the offset initially infused in the network by the nodes following the protocol, vanishes. 
\end{remark}

\vspace{-0.2cm}

\subsection{Convergence Analysis}

For the development of the necessary results regarding the operation of Algorithm~\ref{algorithm1} let us consider the following setup, the analysis of which for the non privacy-preserving case can be found in \cite{2020:Rikos_Mass_Accum}. 

{\it Setup:} Consider a strongly connected digraph $\mathcal{G}_d = (\mathcal{V}, \mathcal{E})$ with $n=|\mathcal{V}|$ nodes and $m=|\mathcal{E}|$ edges. 
During the execution of Algorithm~\ref{algorithm1}, at time step $k_0$, there is at least one node $v_{j'} \in \mathcal{V}$, for which 
\begin{equation}\label{great_z_prop1_det_1}
z_{j'}[k_0] \geq z_i[k_0], \ \forall v_i \in \mathcal{V}.
\end{equation}
Then, among the nodes $v_{j'}$ for which (\ref{great_z_prop1_det_1}) holds, there is at least one node $v_j$ for which 
\begin{equation}\label{great_z_prop2_det_2}
\widetilde{y}_j[k_0] \geq \widetilde{y}_{j'}[k_0] , \ v_j, v_{j'} \in \{ v_i \in \mathcal{V} \ | \ (\ref{great_z_prop1_det_1}) \ \text{holds} \}.
\end{equation}
For notational convenience we will call the mass variables of node $v_j$ for which (\ref{great_z_prop1_det_1}) and (\ref{great_z_prop2_det_2}) hold as the ``leading mass'' (or ``leading masses'').
Now we present the following two lemmas, which are helpful in the development of our results.

\begin{lemma}[\hspace{-0.00001cm}\cite{2020:Rikos_Mass_Accum}]\label{first_lemma}
Under the above {\it Setup}, the ``leading mass'' or ``leading masses'' at time step $k$, will always fulfill the ``Event-Trigger Conditions'' (C1) and (C2). 
This means that the mass variables of node $v_j$ for which (\ref{great_z_prop1_det_1}) and (\ref{great_z_prop2_det_2}) hold at time step $k_0$ will be transmitted (at time step $k_0$) by $v_j$ to an out-neighbor $v_l \in \mathcal{N}_j^+$. 
\end{lemma}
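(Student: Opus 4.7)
The plan is to exploit conservation of the $z$-mass together with the lexicographic maximality of the leading mass. The key observation is that the state pair $(z^s_j[k_0], y^s_j[k_0])$ records the most recent mass that node $v_j$ transmitted away (or, if $v_j$ has not triggered yet, its initial mass $(1, \widetilde{y}_j[0])$). Since $v_j$ resets its mass variables to zero immediately after each transmission, this ``ghost'' mass must currently reside within the mass variables of some other node $v_i$, possibly aggregated with further contributions and possibly augmented by offsets injected by intermediate holders along its trajectory.

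First, I would establish by induction on $k$ a bookkeeping invariant showing that the $n$ initial unit $z$-tokens only ever aggregate (never split) under \eqref{subeq:1a}--\eqref{subeq:1b}, so that $\sum_{v_l \in \mathcal{V}} z_l[k] = n$ at every step and the $z$-component $z^s_j[k_0]$ of the ghost mass is wholly contained in $z_i[k_0]$ for some $v_i \in \mathcal{V}$. Hence $z_i[k_0] \geq z^s_j[k_0]$, and combining with \eqref{great_z_prop1_det_1} yields $z_j[k_0] \geq z_i[k_0] \geq z^s_j[k_0]$. If this inequality is strict, condition (C1) fires at $v_j$ and the transmission described in the lemma follows from the iteration rule of Algorithm~\ref{algorithm1}. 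Otherwise $z_j[k_0] = z_i[k_0] = z^s_j[k_0]$, so the ghost mass has not been merged with any other strictly positive $z$-mass on its way to $v_i$. The corresponding $\widetilde{y}$-entry at $v_i$ equals $y^s_j[k_0]$ plus the sum of the nonnegative offsets injected by intermediate holders while relaying the ghost mass, so by \eqref{Offset_value_1b} we obtain $\widetilde{y}_i[k_0] \geq y^s_j[k_0]$. Because $v_i$ is one of the maximum-$z$ nodes, the tie-breaking rule \eqref{great_z_prop2_det_2} yields $\widetilde{y}_j[k_0] \geq \widetilde{y}_i[k_0] \geq y^s_j[k_0]$, so condition (C2) fires.

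The main technical obstacle is precisely the inductive bookkeeping of the ghost mass's trajectory: one must verify that each receive-trigger-transmit hop preserves the underlying partition of $z$-tokens, and that the offset injections can only monotonically increase the $\widetilde{y}$-value carried along that trajectory, a fact which follows from \eqref{Offset_value_1b}. Once this invariant is in place, no further calculation is required and the lemma reduces to the case analysis above, mirroring the argument in \cite{2020:Rikos_Mass_Accum} with the additional $y$-monotonicity needed to accommodate the event-based privacy offsets.
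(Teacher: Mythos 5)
The paper does not actually prove Lemma~\ref{first_lemma}: it is imported verbatim from \cite{2020:Rikos_Mass_Accum}, so there is no in-paper argument to compare against. Your reconstruction is sound and follows the same ``last transmitted mass is still alive somewhere with at least as large a $z$-component'' idea used in that reference, correctly augmented with the one new ingredient this paper's setting requires, namely that the event-based offsets are nonnegative by \eqref{Offset_value_1b} (and all relayed masses carry $z\geq 1$, so equality of $z$ forces no merging), which is exactly what makes the $y$-comparison in condition (C2) go through. The only point to tighten in a written version is the bookkeeping of masses in transit between transmission at step $k$ and incorporation at the receiver via \eqref{subeq:1a}--\eqref{subeq:1b}, so that the conservation and ``ghost mass'' localization claims are stated at the same time instants as \eqref{great_z_prop1_det_1}--\eqref{great_z_prop2_det_2}; this is a presentational matter and does not affect the validity of the argument.
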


\begin{lemma}\label{second_lemma}
Under the above {\it Setup}, we have that if the event-triggered conditions of node $v_j \in \mathcal{V}_p$ are fulfilled in at least $(L_j+1)$ instances then, from \eqref{Offset_value_1c}--\eqref{Offset_value_1d}, we have that the accumulated amount of offset injected by node $v_j$ to the network becomes equal to zero. 
\end{lemma}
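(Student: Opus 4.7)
The proof plan is essentially a direct bookkeeping argument tracking the offsets injected by node $v_j$ throughout its execution of Algorithm~1. First, I would carefully identify every occasion on which $v_j$ adds an offset to its mass variable $y_j$, distinguishing the initial injection from the event-triggered injections. By the initialization step of Algorithm~1, node $v_j$ sets $\widetilde{y}_j[0] = y_j[0] + u_j$, which contributes $u_j$ to the total offset introduced into the network. Thereafter, at the iteration step, whenever any of conditions (C1) or (C2) is satisfied at some time $k+1$, node $v_j$ sets $\widetilde{y}_j[k+1] = u_j[l_j] + y_j[k+1]$ and then increments $l_j \leftarrow l_j + 1$, thus contributing $u_j[l_j]$ to the cumulative injected offset and advancing the counter.

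Next, I would sum up these contributions. Denote by $T_j \geq L_j + 1$ the number of times the event-triggered conditions fire at node $v_j$ throughout the execution. Because the counter $l_j$ starts at $0$ and is incremented by $1$ after each trigger, the value of $l_j$ at the $i$-th trigger is exactly $i-1$. Hence the total offset accumulated by node $v_j$ in the network equals
\begin{equation*}
\Delta_j \;=\; u_j + \sum_{i=1}^{T_j} u_j[i-1] \;=\; u_j + \sum_{l_j=0}^{T_j-1} u_j[l_j].
\end{equation*}
Splitting the sum at the threshold $L_j$ and applying \eqref{Offset_value_1d}, the tail $\sum_{l_j = L_j+1}^{T_j-1} u_j[l_j]$ vanishes since $u_j[l_j] = 0$ for every $l_j \notin [0,L_j]$. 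Consequently $\Delta_j = u_j + \sum_{l_j=0}^{L_j} u_j[l_j]$, and invoking the initialization constraint \eqref{Offset_value_1e}, namely $u_j = -\sum_{l_j=0}^{L_j} u_j[l_j]$, yields $\Delta_j = 0$, which is the claim.

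The only subtlety, and the sole place where care is required, is the index alignment between the counter $l_j$ and the event number (the ``off-by-one'' between injection and increment), together with the observation that condition \eqref{Offset_value_1c}, $L_j \geq \mathcal{D}_j^+$, is not actually used in this particular lemma (it is needed later for the privacy guarantee, not for the cancellation of the accumulated offset). Apart from this accounting, the result is an immediate consequence of the initialization constraints \eqref{Offset_value_1e}, \eqref{Offset_value_1b}, \eqref{Offset_value_1d}, so no genuine obstacle arises; the proof is essentially just verifying that the algorithmic construction implements the telescoping cancellation dictated by \eqref{Offset_value_1}.
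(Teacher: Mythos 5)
Your proof is correct and follows essentially the same route as the paper's (which is a much terser two-sentence version of the same bookkeeping: each trigger injects $u_j[l_j]$ and increments $l_j$, so after $L_j+1$ triggers the sum $u_j + \sum_{l_j=0}^{L_j} u_j[l_j]$ cancels by \eqref{Offset_value_1e}). Your explicit index-tracking and the observation that \eqref{Offset_value_1c} is not needed here are both accurate refinements of the paper's argument.
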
 

\begin{proof}
Each node $v_j  \in \mathcal{V}_p$ at time step $k$ adds the offset $u_j[l_j]$ to its mass variable $y_j[k]$ if and only if the event-triggered conditions (C1) and (C2) hold. 
As a result, if the event-triggered conditions of $v_j$ are fulfilled for at least $(L_j+1)$ instances then the accumulated amount of offset node $v_j$ has injected in the computation becomes equal to zero. 
\end{proof}


The following theorem states that the proposed algorithm allows all nodes to reach quantized average consensus after a finite number of steps, for which we provide an upper bound.

\begin{theorem}\label{Privacy_Conver_Quant_Av}
Consider a strongly connected digraph $\mathcal{G}_d = (\mathcal{V}, \mathcal{E})$ with $n=|\mathcal{V}|$ nodes and $m=|\mathcal{E}|$ edges. 
The execution of Algorithm~\ref{algorithm1} allows each node $v_j \in \mathcal{V}$ to reach quantized average consensus after a finite number of steps, $\mathcal{S}_t$, bounded by {$\mathcal{S}_t \leq m^2(L_{\max}+1+n)$}, where $n$ is the number of nodes, $m$ is the number of edges in the network and $L_{\max} = \max_{v_j \in \mathcal{V}} L_j$ is the maximum value of offset adding steps chosen by nodes in the network. 
\end{theorem}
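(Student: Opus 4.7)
My plan is to split the analysis of Algorithm~\ref{algorithm1} into two phases and bound each separately. The first phase lasts until every privacy-preserving node $v_j \in \mathcal{V}_p$ has had its event-triggered conditions fire at least $L_j+1$ times; the second phase begins once this has occurred and consists of the residual consensus dynamics. The total bound $m^2(L_{\max}+1+n)$ will decompose naturally as $m^2(L_{\max}+1)+m^2 n$, corresponding respectively to these two phases.

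For the first phase, I would appeal to Lemma~\ref{first_lemma} (the leading-mass property) to argue that every $m^2$ consecutive time steps suffice to fire, at least once, the event-based conditions of every privacy-preserving node whose counter $l_j$ has not yet reached $L_j+1$. The key observation is that each injected offset $u_j[l_j]$ is nonnegative by \eqref{Offset_value_1b}, so a firing at a privacy-preserving node never decreases its transmitted mass; hence the leading-mass trajectory underpinning the proof of Theorem~\ref{Conver_Quant_Av} continues to propagate through the graph even when privacy nodes are injecting. Iterating this $m^2$-step window argument $L_{\max}+1$ times ensures that within $(L_{\max}+1)m^2$ steps, each $v_j \in \mathcal{V}_p$ has fired at least $L_{\max}+1 \geq L_j+1$ times. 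By Lemma~\ref{second_lemma} together with constraint \eqref{Offset_value_1e}, at that moment the net offset injected by every privacy-preserving node is zero, so the global mass sum equals $\sum_{j=1}^{n} y_j[0]$.

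For the second phase, I would treat the configuration of all mass and state variables at the end of the first phase as a valid initial condition for the non-privacy-preserving algorithm of Section~\ref{Prel_Aver}, whose total mass sum is precisely $\sum_{j=1}^{n} y_j[0]$ and from which no further offsets will be injected (since $u_j[l_j]=0$ for $l_j > L_j$ by \eqref{Offset_value_1d}). Invoking Theorem~\ref{Conver_Quant_Av} then bounds the remaining time to quantized average consensus by $nm^2$. Summing the two phase bounds yields $\mathcal{S}_t \leq (L_{\max}+1)m^2 + nm^2 = m^2(L_{\max}+1+n)$, as claimed.

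The main obstacle I anticipate is the first phase: rigorously establishing that a privacy-preserving node $v_j$ whose counter still satisfies $l_j \leq L_j$ must fire at least once per $m^2$-step window. The initial injection of the large negative offset $u_j \leq -\mathcal{D}_j^+$ at each $v_j \in \mathcal{V}_p$ can temporarily suppress its own firings when only modest incoming mass arrives, and one must show that the leading mass, whose circulation is guaranteed by Lemma~\ref{first_lemma}, eventually routes enough mass into $v_j$ to overcome this negative offset and trigger (C1) or (C2). I would handle this via a circulation-and-merging argument analogous to the one behind the $nm^2$ bound in \cite{2020:Rikos_Mass_Accum}, showing that in an $m^2$-step window the leading mass either merges with $v_j$'s stored mass or causes $v_j$ to become the new leading-mass carrier; in either case the conditions (C1)/(C2) must fire at $v_j$, incrementing $l_j$.
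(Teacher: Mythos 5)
Your proposal is correct and follows essentially the same route as the paper: the same decomposition into an offset-cancellation phase of $m^2(L_{\max}+1)$ steps (using Lemma~\ref{first_lemma} to guarantee each node's conditions fire at least $L_{\max}+1$ times, and Lemma~\ref{second_lemma} to conclude the accumulated offsets vanish) followed by an $nm^2$-step phase in which the resulting configuration is handled by the non-private analysis of Theorem~\ref{Conver_Quant_Av}. The obstacle you flag about the negative initial offset possibly suppressing firings is also present (and treated no more rigorously) in the paper's own proof, which simply asserts the leading mass visits every node once per $m^2$-step window.
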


\begin{proof}
See Appendix~\ref{proof:theorem2}.
\end{proof}

\begin{remark}
Theorem~\ref{Privacy_Conver_Quant_Av} relies on the fact that if the proposed distributed protocol is executed for a finite number of time steps equal to $m^2 L_{\max}$, then every node in the network will receive a set of nonzero mass variables that are equal to the leading mass for at least $L_{\max}$ instances. 
This means that the event-triggered conditions will be fulfilled for each node in the network for at least $L_{\max}$ instances and, from Lemma~\ref{second_lemma}, the accumulated amount of offset injected in the network from each node is equal to zero. 
As a result, by executing the proposed protocol for an additional number of time steps equal to $n m^2$, we have that every nonzero mass in the network will merge to one leading mass (or multiple equally-valued leading masses) that is (are) equal to the average of the initial states, and subsequently this (these) leading mass (masses) will update the state variables of each node in the network, setting them equal to the average of the initial states. 
\end{remark}

\subsection{Topological conditions for privacy preservation} \label{subsec:conditions}


We establish topological conditions that ensure privacy for the nodes following the proposed protocol despite the presence of possibly colluding curious nodes in the network.

\begin{prop}\label{prop:1}
Consider a fixed strongly connected digraph $\mathcal{G}_d = (\mathcal{V}, \mathcal{E})$ with $n=|\mathcal{V}|$ nodes. Assume that a subset of nodes $\mathcal{V}_p$ follow the predefined privacy-preserving protocol, as described in Algorithm~\ref{algorithm1}, with offsets chosen as in \eqref{Offset_value_1c}-\eqref{Offset_value_1d}. 
Curious node $v_c \in \mathcal{V}_c$ will not be able to identify the initial state $y_j [0]$ of $v_j \in \mathcal{V}_p$ , as long as  $v_j$ has 
\begin{list5}
\item[a)] at least one other node  (in- or out-neighbor) $v_{\ell}\in \mathcal{V}_p$ connected to it, or
\item[b)] has a non-curious in-neighbor $v_i \notin \mathcal{V}_p$ which first transmits to node $v_j$ at initialization.
\end{list5}
\end{prop}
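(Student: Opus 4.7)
The plan is to prove Proposition~\ref{prop:1} via an indistinguishability argument. For each case, I would exhibit, given any nominal execution consistent with the observations at the curious node $v_c$, an alternative execution with a different initial state $y_j'[0] \neq y_j[0]$ and compatible offset choices that produces exactly the same sequence of messages received by $v_c$ as well as the same final quantized average $\overline{y}$. By Definition~\ref{Definition_Quant_Privacy}, the existence of two distinct candidate values of $y_j[0]$ consistent with $v_c$'s complete view establishes that $v_c$ cannot determine $y_j[0]$ exactly.

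For Case~(a), let $v_\ell \in \mathcal{V}_p$ be the privacy-preserving node connected to $v_j$ and pick any non-zero integer perturbation $\delta$ of sufficiently small magnitude. Construct the alternative execution by setting $y_j'[0] = y_j[0] + \delta$, $y_\ell'[0] = y_\ell[0] - \delta$, $u_j' = u_j - \delta$, and $u_\ell' = u_\ell + \delta$, leaving all other initial states and offsets untouched. Since $y_j'[0] + y_\ell'[0] = y_j[0] + y_\ell[0]$, the final average is preserved, and by construction $\widetilde{y}_j'[0] = \widetilde{y}_j[0]$ and $\widetilde{y}_\ell'[0] = \widetilde{y}_\ell[0]$, so the initial transmissions of both privacy-preserving nodes are identical to those of the nominal execution. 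The zero-sum constraint~\eqref{Offset_value_1e} then requires a compensating change of magnitude $\delta$ in exactly one step offset of $v_j$ and, symmetrically, in one step offset of $v_\ell$. Constraint~\eqref{Offset_value_1c} ($L_j \geq \mathcal{D}_j^+$) together with the round-robin schedule guarantees that $v_j$ transmits at least one step offset to every one of its out-neighbors, and analogously for $v_\ell$. Place the $+\delta$ adjustment in the $v_j$ step offset that is transmitted along an internal $v_j$-$v_\ell$ edge, and the $-\delta$ adjustment in the matching $v_\ell$ step offset whose target is also on the internal edge; either the out-edge from $v_j$ to $v_\ell$ (if $v_\ell\in\mathcal{N}_j^+$) or the out-edge from $v_\ell$ to $v_j$ (if $v_j\in\mathcal{N}_\ell^+$) is available for this purpose. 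Since this internal edge is not incident to $v_c$, the perturbation is hidden from $v_c$'s view.

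For Case~(b), let $v_i \in \mathcal{V}_n$ be the non-privacy in-neighbor whose first initialization transmission is addressed to $v_j$. Construct the alternative execution with $y_j'[0] = y_j[0] + \delta$, $y_i'[0] = y_i[0] - \delta$, $u_j' = u_j - \delta$ (hence $\widetilde{y}_j'[0] = \widetilde{y}_j[0]$), and the first step offset of $v_j$ after its reception from $v_i$ incremented by $\delta$. The sum $\sum_i y_i[0]$, and so the final average, is preserved. The perturbed transmission from $v_i$, carrying $y_i'[0] = y_i[0]-\delta$, is directed exclusively at $v_j$ by hypothesis and is therefore not observed by $v_c$. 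At $v_j$'s first trigger after receiving from $v_i$, the compensating step-offset increment exactly cancels the $-\delta$ perturbation, restoring the outgoing transmitted value to its nominal level; thereafter $v_j$'s mass resets to zero and the execution proceeds identically to the nominal one.

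The main technical obstacle is ensuring that the perturbed execution proceeds through the same sequence of triggering events---both in time and in destination---as the nominal one, so that identical message traces are presented to $v_c$. This reduces to an inductive argument over triggering events showing that the mass variables $(y,z)$ of every node coincide at every time step observed by $v_c$, invoking Lemma~\ref{first_lemma} to track the propagation of the leading mass and exploiting that $z$-values are unaffected by $y$-perturbations so condition~(C1) fires at the same times in both executions. Feasibility of the modified offsets (integrality, non-negativity of step offsets, and $u'_\cdot \leq -\mathcal{D}_\cdot^+$) holds for any sufficiently small $|\delta|$ given the slack in the original offset choices, and Theorem~\ref{Privacy_Conver_Quant_Av} ensures that both executions converge to the correct average.
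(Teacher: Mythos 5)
Your proposal rests on the same core idea as the paper's proof: whatever the curious nodes observe determines at most the \emph{sum} of $y_j[0]$ with quantities they cannot see --- the offsets of a connected node $v_{\ell}\in\mathcal{V}_p$ in case~(a), or the initial state of the first-transmitting non-curious in-neighbor $v_i$ in case~(b) --- so by Definition~\ref{Definition_Quant_Privacy} the state cannot be inferred exactly. The paper argues this informally, via a scenario-by-scenario discussion (including the negative scenarios that motivate why the hypotheses are needed and the recommendation $P_{ji}=0$), and never actually constructs the second consistent execution. You do construct it explicitly, with $y'_j[0]=y_j[0]+\delta$ and compensating changes to $u_j$, $u_{\ell}$ (or $y_i[0]$) and to the step offsets routed along the unobserved internal edge. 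This is more constructive than the paper's argument, and it also covers the out-neighbor sub-case of condition~(a), which the paper's scenario~4 only states for an in-neighbor.

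The added rigor, however, exposes a step you flag but do not close. Your induction relies on the triggering schedule being identical in the two executions and justifies this only via condition (C1), which depends on the $z$-variables and is indeed $\delta$-invariant. But condition (C2) is ``$z_j[k+1]=z^s_j[k]$ and $y_j[k+1]\geq y^s_j[k]$'': between the receipt of the perturbed mass and the trigger at which the compensating step offset is applied, the holding node's $y$-variable differs by $\delta$ from its nominal value, so (C2) can fire in one execution and not the other, changing the transmission schedule and hence the trace seen by $v_c$. You would need to argue either that the compensating offset is applied at the very receipt event of the perturbed mass (so no unmatched $\delta$ is ever present when (C2) is evaluated), or choose the sign of $\delta$ so that (C2) can only become \emph{easier} to satisfy in a way that provably does not alter the observable trace. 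A second, smaller loose end is feasibility under \eqref{Offset_value_1b}: the step offset that must be \emph{decreased} by $|\delta|$ may already be zero, and picking the sign of $\delta$ rescues only one of the two adjusted nodes. Neither issue is addressed in the paper either, but the paper's proof never descends to the level of detail at which they arise.
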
 
In other words, if the condition in Proposition~\ref{prop:1} is satisfied, the network will reach average consensus and the privacy of the initial states of the nodes following the privacy-preserving protocol will be preserved. 

\begin{proof}
See Appendix~\ref{proof:prop1}.
\end{proof}

Note here that a set of curious nodes could also attempt to ``estimate'' the initial states of some other nodes (e.g., by taking into account any available statistics about the initial states, $u_j$ and $L_j$). 
However, in our analysis in this paper, we are interested in whether the curious nodes can exactly infer the state of another node (see Definition~\ref{Definition_Quant_Privacy}). 

%
%
%
%
\section{Initial zero-sum Offset Privacy-Preserving Strategy}\label{sec:distr_algo2}

In this section, we present and analyze another privacy preserving algorithm in which offsets are introduced only at the initialization stage. 
The main difference with the approach proposed in Section~\ref{sec:distr_algo} is that the proposed mechanism incorporates an offset to the mass variable of each node only during the initialization steps, in order to effectively preserve the privacy of its initial state. 

\subsection{Initialization for Quantized Privacy Strategy}

We have that each node maintains the variables $u_j \in \mathbb{Z}$ and $u^{(l)}_j \in \mathbb{Z}$ for every $v_l \in \mathcal{N}^+_j$ and its transmission counter $c_j \in \mathbb{N}$. 
Then, during initialization, it chooses the variables $u^{(l)}_j$ and $u_j$, to satisfy the following constraints. 
\begin{subequations}\label{Offset_value_2}
\begin{align}
u^{(l)}_j \in \mathbb{Z}, \ \forall v_l \in \mathcal{N}_j^+, \label{Offset_value_2a} \\
u_j = - \sum_{v_l \in \mathcal{N}_j^+} u^{(l)}_j, \label{Offset_value_2b}
\end{align}
\end{subequations}
Constraints \eqref{Offset_value_2a} and \eqref{Offset_value_2b} are explicitly analyzed below: 
\begin{list4}
\item[1)] In \eqref{Offset_value_2a} an integer offset for out-neighbor $v_l \in \mathcal{N}_j^+$, is selected in order to be infused to the initial state that node $v_j$ transmits to that out-neighbor $v_l \in \mathcal{N}_j^+$. 
The selection of a set of integer offsets, each one corresponding to an out-neighbor $v_l \in \mathcal{N}_j^+$, has to do with privacy preservation guarantees as discussed in Section~\ref{subsec:conditions_2}.
\item[2)] Eq. \eqref{Offset_value_2b} means that the accumulated offset infused in the network during the initialization steps by node $v_j$ is equal to zero and the exact quantized average of the initial states can be calculated without any error.
\end{list4}
It will be seen later that the strategy of incorporating an offset to the mass variable of each node only during the initialization steps increases the convergence speed of the proposed algorithm while it also improves the topological conditions that ensure privacy for the nodes following the proposed protocol.

\subsection{Algorithm Description}

The proposed algorithm is a quantized value transfer process in which each node in a strongly connected digraph $\mathcal{G}_d = (\mathcal{V}, \mathcal{E})$ performs operations and transmissions according to a set of event-triggered conditions. 
The intuition behind the algorithm is as follows. 
Each node $v_j$ that would like to preserve its privacy performs the following steps: 
\begin{list4}
\item It selects a set of integer offsets $u^{(l)}_j$, one for each out-neighbor $v_l \in \mathcal{N}_j^+$. 
It transmits the values $u^{(l)}_j$ to every out-neighbor $v_l \in \mathcal{N}_j^+$, while it receives the values $u^{(j)}_i$ from its in-neighbors $v_i \in \mathcal{N}_j^-$. 
Then, it sets its initial state equal to 
\begin{equation}\label{init_y_enh}
\widetilde{y}_j[0] = y_j[0] + u_j + \sum_{v_i \in \mathcal{N}_j^-}u^{(j)}_i ,
\end{equation}
where the initial offset $u_j$ is equal to $u_j = - \sum_{v_l \in \mathcal{N}_j^+} u^{(l)}_j$. 
For example, suppose that node $v_j$ has four out-neighbors ($\mathcal{D}_j^+ = 4$), three in-neighbors ($\mathcal{D}_j^- = 3$) and initial state $y_j[0] = 6$. 
This means that it can choose $u^{(0)}_j = 3$, $u^{(1)}_j = -2$, $u^{(2)}_j = 5$, and $u^{(3)}_j = -3$, while it sets $u_j = -3$. 
It transmits each of the values $u^{(0)}_j$, $u^{(1)}_j$, $u^{(2)}_j$, and $u^{(3)}_j$ to the corresponding out-neighbor, while it receives the values say, $8$, $3$ and $6$, from its in-neighbors. 
Then, from (\ref{init_y_enh}), it sets its initial state equal to $\widetilde{y}_j[0] = 20$. 
Note that (\ref{init_y_enh}) is essential not only for preserving the privacy of every node's initial quantized state but also to preserve the sum of the initial states, i.e., $\sum_{v_j \in \mathcal{V}} y_j[0] = \sum_{v_j \in \mathcal{V}} \widetilde{y}_j[0]$, as it will be seen later. 
\item It chooses an out-neighbor $v_l \in \mathcal{N}_j^+$ according to the unique order $P_{lj}$ (initially, it chooses $v_l \in \mathcal{N}_j^+$ such that $P_{lj}=0$) and transmits $\widetilde{y}_j[0]$ and $z_j[0]$ to this out-neighbor. Then, it sets $\widetilde{y}_j[0] = 0$ and $z_j[0] = 0$. 
\item During the execution of the algorithm, at every step $k$, node $v_j$ may receive a set of mass variables $\widetilde{y}_i[k]$ and $z_i[k]$ from each in-neighbor $v_i \in \mathcal{N}_j^-$. 
Then, node $v_j$ updates its state according to \eqref{subeq:1a}--\eqref{subeq:1b} and checks whether any of its event-triggered conditions hold (where in \eqref{subeq:1a} we use $\widetilde{y}_j[k]$). 
If so, it sets its state variables $y^s_j[k+1]$ and $z^s_j[k+1]$ equal to $\widetilde{y}_j[k+1]$ and $z_j[k+1]$, respectively, and transmits them to an out-neighbor according to the predetermined unique order. 
If none of the conditions \eqref{subeq:1a}--\eqref{subeq:1b} hold, then node $v_j$ stores $\widetilde{y}_j[k+1]$ and $z_j[k+1]$. 
Note that if no message is received from any of the in-neighbors, the mass variables remain the same. 
\end{list4}

\noindent
The proposed algorithm is summarized in Algorithm~\ref{algorithm2}. 
The curious nodes $v_c \in \mathcal{V}_c$ that try to identify the initial states of other nodes either follow Algorithm~\ref{algorithm2} or the underlying average consensus algorithm in Section~\ref{subsec:not_priv_alg}. 

\begin{varalgorithm}{2}
\caption{Privacy-Preserving Event-Triggered Quantized Average Consensus with Initial Zero-Sum Offset}
\noindent \textbf{Input:} A strongly connected digraph $\mathcal{G}_d = (\mathcal{V}, \mathcal{E})$ with $n=|\mathcal{V}|$ nodes and $m=|\mathcal{E}|$ edges. 
Each node $v_j\in \mathcal{V}$ has an initial state $y_j[0] \in \mathbb{Z}$. \\
\textbf{Initialization:} Each node $v_j \in \mathcal{V}_p$ does the following:
\begin{list4}
\item[1)] It assigns a \textit{unique order} $P_{lj}$ in the set $\{0,1,..., \mathcal{D}_j^+ -1\}$ to each of its out-neighbors $v_l \in \mathcal{N}^+_j$. 
\item[2)] It sets counter $c_j$ to $0$ and priority index $e_j$ to $c_j$.  
\item[3)] It chooses $u^{(l)}_j \in \mathbb{Z}$, for every $v_l \in \mathcal{N}_j^+$. 
Then, it sets $u_j = - \sum_{v_l \in \mathcal{N}_j^+} u^{(l)}_j$.
\item[4)] It transmits $u^{(l)}_j$ to each $v_l \in \mathcal{N}_j^+$.
\item[5)] It sets $\widetilde{y}_j[0] = y_j[0] + u_j + \sum_{v_i \in \mathcal{N}_j^-} u^{(j)}_i$, $z_j[0] = 1$, $y^s_j[0] = \widetilde{y}_j[0]$ and $z^s_j[0] = z_j[0]$ (which means that $q^s_j[0] = \widetilde{y}_j[0] / z^s_j[0]$).
\item[6)] It selects out-neighbor $v_l \in \mathcal{N}_j^+$ such that $P_{lj} =e_j$ and transmits $\widetilde{y}_j[0]$ and $z_j[0]$ to this out-neighbor. 
Then, it sets $\widetilde{y}_j[0] = 0$ and $z_j[0] = 0$. 
\item[7)] It sets $c_j = c_j + 1$ and $e_j = c_j \mod \mathcal{D}^+_j$. 
\end{list4}
\textbf{Iteration:} For $k=0,1,2,\dots$, each node $v_j \in \mathcal{V}_p$ does the following:
\begin{list4}
\item \textbf{if} it receives $\widetilde{y}_i[k]$ and $z_i[k]$ from at least one in-neighbor $v_i \in \mathcal{N}_j^-$ \textbf{then} it updates $\widetilde{y}_j[k+1]$ and $z_j[k+1]$ according to \eqref{subeq:1a}-\eqref{subeq:1b}.
\begin{list4a}
\item[$\circ$] \textbf{if} any of conditions (C1) and (C2) hold \textbf{then} 
\begin{list4a}
\item[-] it sets $y^s_j[k+1] = \widetilde{y}_j[k+1]$, $z^s_j[k+1] = z_j[k+1]$ and $q^s_j[k+1] = y^s_j[k+1] / z^s_j[k+1]$;
\item[-] it transmits $\widetilde{y}_j[k+1]$ and $z_j[k+1]$ to out-neighbor $v_{\lambda} \in \mathcal{N}_j^+$ for which $P_{\lambda j} = e_j$ and it sets $\widetilde{y}_j[k+1] = 0$ and $z_j[k+1] = 0$; 
\item[-] it sets $c_j = c_j + 1$ and $e_j = c_j \mod \mathcal{D}^+_j$. 
\end{list4a}
\item[$\circ$] \textbf{else} it stores $\widetilde{y}_j[k+1]$ and $z_j[k+1]$.
\end{list4a}
\end{list4}
\textbf{Output:} \eqref{alpha_z_y} and \eqref{alpha_q} hold for every $v_j \in \mathcal{V}$. 
\label{algorithm2}
\end{varalgorithm}

\subsection{Deterministic Convergence Analysis}

The following theorem states that Algorithm~\ref{algorithm2} allows all nodes to reach quantized average consensus after a finite number of steps, for which we provide an upper bound.

\begin{theorem}\label{Privacy_Conver_Quant_Av_Alg2}
Consider a strongly connected digraph $\mathcal{G}_d = (\mathcal{V}, \mathcal{E})$ with $n=|\mathcal{V}|$ nodes and $m=|\mathcal{E}|$ edges. 
The execution of Algorithm~\ref{algorithm2} allows each node $v_j \in \mathcal{V}$ to reach quantized average consensus after a finite number of steps, $\mathcal{S}_t$, bounded by {$\mathcal{S}_t \leq m^2 n$}, where $n$ is the number of nodes and $m$ is the number of edges in the network. 
\end{theorem}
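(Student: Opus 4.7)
The plan is to reduce Algorithm~\ref{algorithm2} after its initialization to the unmodified event-triggered quantized iteration of Section~\ref{Prel_Aver}, and then invoke Theorem~\ref{Conver_Quant_Av} directly to recover the bound $\mathcal{S}_t \leq n m^2$.

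The key preliminary step would be to show that the initialization phase preserves the aggregate initial mass, i.e., $\sum_{v_j \in \mathcal{V}} \widetilde{y}_j[0] = \sum_{v_j \in \mathcal{V}} y_j[0]$. From Step~5 of the initialization,
\[
\sum_{v_j \in \mathcal{V}} \widetilde{y}_j[0] = \sum_{v_j \in \mathcal{V}} y_j[0] + \sum_{v_j \in \mathcal{V}} u_j + \sum_{v_j \in \mathcal{V}} \sum_{v_i \in \mathcal{N}_j^-} u^{(j)}_i.
\]
Substituting $u_j = -\sum_{v_l \in \mathcal{N}_j^+} u^{(l)}_j$ from \eqref{Offset_value_2b}, the second term is a sum over directed edges in which each edge $(v_l, v_j) \in \mathcal{E}$ contributes $-u^{(l)}_j$ (the offset that sender $v_j$ assigns to receiver $v_l$). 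The third term is likewise a sum over directed edges in which each edge $(v_j, v_i) \in \mathcal{E}$ contributes $+u^{(j)}_i$ (the offset that sender $v_i$ assigns to receiver $v_j$). After reindexing, both double sums equal $\sum_{(v_a, v_b) \in \mathcal{E}} u^{(a)}_b$ and therefore cancel. Non-private nodes can be treated as using $u^{(l)}_j = 0$ for every out-neighbor while still absorbing the offsets forwarded to them by private in-neighbors, so the cancellation is edge-by-edge and holds irrespective of the partition $\mathcal{V} = \mathcal{V}_p \cup \mathcal{V}_c \cup \mathcal{V}_n$.

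The next step is to observe that for $k \geq 0$ Algorithm~\ref{algorithm2} performs exactly the mass update \eqref{subeq:1a}--\eqref{subeq:1b} together with the state update \eqref{State_Prel_Aver_YZ} triggered by conditions (C1)--(C2); the only distinction from the baseline scheme of Section~\ref{Prel_Aver} is that the starting values $\widetilde{y}_j[0]$ replace $y_j[0]$. Since the sum preservation above implies that the average of the new initial values still equals $\overline{y}$, Theorem~\ref{Conver_Quant_Av} applies verbatim and yields quantized average consensus in at most $n m^2$ iterations, matching the stated bound.

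The main obstacle, and the part that most needs care, is the bookkeeping for the mixed population of private and non-private nodes during the initialization exchange: one must verify that the offset convention is unambiguous along any edge whose endpoints lie in different subsets (e.g., when a private node sends $u^{(l)}_j$ to a non-private out-neighbor $v_l$, that out-neighbor must still incorporate the received value into its $\widetilde{y}_l[0]$). Once this interface is fixed, the edge-by-edge cancellation is routine and the rest of the argument is a direct reduction to the existing finite-time convergence result.
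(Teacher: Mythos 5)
Your proposal is correct and follows essentially the same route as the paper's proof: establish $\sum_{v_j \in \mathcal{V}} \widetilde{y}_j[0] = \sum_{v_j \in \mathcal{V}} y_j[0]$ from the zero-sum offset construction, then observe that the iteration coincides with the baseline scheme of Section~\ref{Prel_Aver} and invoke Theorem~\ref{Conver_Quant_Av} to get $\mathcal{S}_t \leq n m^2$. Your explicit edge-by-edge reindexing simply spells out the cancellation that the paper asserts directly in passing from its equation for $\sum_{v_j} u_j$ to the sum-preservation identity, so the two arguments are the same in substance.
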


\begin{proof}
See Appendix~\ref{proof:theorem3}.
\end{proof}

\vspace{-.5cm}

\subsection{Topological conditions for privacy preservation} \label{subsec:conditions_2}

The topological conditions for ensuring privacy for the nodes following Algorithm~\ref{algorithm2} are given in Proposition~\ref{prop_privacy_2}.

\begin{prop}\label{prop_privacy_2}
Consider a fixed strongly connected digraph $\mathcal{G}_d = (\mathcal{V}, \mathcal{E})$ with $n=|\mathcal{V}|$ nodes. 
Assume that a subset of nodes $\mathcal{V}_p$ follow the predefined privacy-preserving protocol, as described in Algorithm~\ref{algorithm2}, with offsets chosen as in \eqref{Offset_value_2a}-\eqref{Offset_value_2b}. 
Curious node $v_c \in \mathcal{V}_c$ will not be able to identify the initial state $y_j[0]$ of $v_j$, as long as $v_j \in \mathcal{V}_p$ and there exists at least one out-neighbor $v_l$ that is not curious (i.e., $v_l \notin \mathcal{V}_c$). 
\end{prop}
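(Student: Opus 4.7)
My approach is to exhibit, for any transcript observable by the colluding set $\mathcal{V}_c$, at least two distinct integer initial states of $v_j$ consistent with it; by Definition~\ref{Definition_Quant_Privacy} this is enough. Let $v_l \in \mathcal{N}_j^+ \setminus \mathcal{V}_c$ be the non-curious out-neighbour guaranteed by hypothesis, and let $\delta \in \mathbb{Z} \setminus \{0\}$. I would compare the ``true'' execution of Algorithm~\ref{algorithm2} against an ``alternative'' execution obtained by the single coordinated perturbation
\[
y_j'[0] = y_j[0] + \delta, \quad y_l'[0] = y_l[0] - \delta, \quad u^{(l)'}_j = u^{(l)}_j + \delta,
\]
while every other initial state and every other offset $u^{(\cdot)}_{\cdot}$ is left unchanged. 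Because $v_j$ is free to choose any integer offsets under \eqref{Offset_value_2a}--\eqref{Offset_value_2b}, the alternative scenario is a legitimate execution of the protocol.

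The key step is to verify that the alternative is observationally indistinguishable from the true execution. First, $u^{(l)}_j$ is transmitted in Step~4 exclusively from $v_j$ to $v_l \notin \mathcal{V}_c$, so no curious node ever receives the altered value directly; every other offset $u^{(l')}_j$ with $v_{l'} \in \mathcal{N}_j^+ \cap \mathcal{V}_c$, as well as every offset not originating at $v_j$, is literally identical in the two scenarios. Second, the perturbation flips $u_j$ to $u_j' = u_j - \delta$ while leaving $u_l$ untouched, and substitution into Step~5 yields
\[
\widetilde{y}_j'[0] = (y_j[0]+\delta) + (u_j - \delta) + \sum_{v_i \in \mathcal{N}_j^-} u^{(j)}_i = \widetilde{y}_j[0],
\]
and, since only the single incoming offset $u^{(l)}_j$ at $v_l$ is modified,
\[
\widetilde{y}_l'[0] = (y_l[0]-\delta) + u_l + \Bigl(\sum_{v_i \in \mathcal{N}_l^-} u^{(l)}_i + \delta \Bigr) = \widetilde{y}_l[0].
\]
Every other node has identical $\widetilde{y}[0]$ and $z[0]$, the global sum $\sum_j \widetilde{y}_j[0]$ is preserved, and therefore the entire event-triggered mass-transfer process of Algorithm~\ref{algorithm2}---every trigger, every transmission, every state variable of every curious node, and the final quantized average---coincides across the two executions.

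The plan concludes by observing that the two executions generate identical transcripts for $\mathcal{V}_c$ but disagree in $y_j[0]$ by $\delta$, so no colluding subset of curious nodes can uniquely infer $y_j[0]$. The main obstacle I would treat with care is the second indistinguishability check, namely confirming that the shift of $u^{(l)}_j$ leaks into exactly two initial masses ($\widetilde{y}_j[0]$ and $\widetilde{y}_l[0]$) and is absorbed there by the opposite shifts in $y_j[0]$ and $y_l[0]$; any propagation of $\delta$ into a message eventually delivered to $\mathcal{V}_c$ would destroy the argument, so ruling out such leakage---which reduces to the bookkeeping above---is the crux of the proof.
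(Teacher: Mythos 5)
Your proof is correct, and it reaches the same essential conclusion as the paper but by a genuinely more formal route. The paper's own proof is an informal scenario analysis: it argues that if all neighbours of $v_j$ are curious they can reconstruct $u_j$ from the transmitted and received offsets via \eqref{Offset_value_2b}, whereas if one out-neighbour $v_l$ is non-curious then $u^{(l)}_j$ is never observed by $\mathcal{V}_c$, so the curious nodes cannot decompose $\widetilde{y}_j[0]$ and at best learn the sum of the two nodes' states. You instead construct an explicit pair of executions — perturbing $y_j[0]$ by $\delta$, $y_l[0]$ by $-\delta$, and $u^{(l)}_j$ by $\delta$ — and verify that every quantity visible to $\mathcal{V}_c$ (all offsets delivered to curious nodes, all initial masses $\widetilde{y}_\cdot[0]$, hence the entire deterministic mass-transfer transcript) is unchanged while $y_j[0]$ differs. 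This is precisely the formal criterion stated in the Remark following Definition~\ref{Definition_Quant_Privacy}, so your argument is a rigorous instantiation of what the paper asserts informally; it also makes transparent why only the \emph{sum} $y_j[0]+y_l[0]$ can be exposed, which the paper mentions without derivation. One small point to flag: your alternative execution re-assigns $v_l$'s initial state, which is legitimate only because $y_l[0]$ is itself a priori unknown to the curious nodes (the paper's threat model grants this), and it implicitly relies on the convention — also used in the paper's proof — that a non-curious out-neighbour $v_l$, even if $v_l \in \mathcal{V}_n$, still absorbs the received offset $u^{(l)}_j$ into its initial mass during initialization. With those conventions made explicit, the argument is complete.
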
 

\begin{proof}
As it was the case with Algorithm~\ref{algorithm1}, the topological conditions will be extracted from simple scenarios, which constitute the building blocks of the directed network.
\begin{list5}
\item[1)] It is easy to observe that if all the in- and out-neighbors of node $v_j$ are curious and they communicate with each other, it is not possible for this node to keep its privacy. 
At initialization, the curious nodes will know the values $v_j$ transmitted to its out-neighbors, i.e., $u^{(l)}_j$ to every $v_l \in \mathcal{N}_j^+$. 
Also, the curious nodes will know the values $v_j$ received during the initialization, i.e., $u^{(j)}_i$ from every $v_i \in \mathcal{N}_j^-$. 
Hence, after the Initialization of Algorithm~\ref{algorithm2}, the curious nodes will compute the initial offset of node $v_j$, since the initial offset satisfies \eqref{Offset_value_2b}; hence, privacy of $v_j$'s initial state will not be preserved. 
As a result, at least one neighbor that is not curious is needed.
\item[2)] Let us consider the case for which there exists at least one out-neighbor of node $v_j$, say $v_l$, that is neither curious nor following the privacy-preserving protocol, and all other in- and out-neighbors of both nodes are curious. 
During the Initialization of Algorithm~\ref{algorithm2}, $v_l$ will receive the value $u^{(l)}_j$ from $v_j$ and will sum it with its own initial state in order to calculate $\widetilde{y}_l[0]$. 
Then, $v_j$ will calculate its own $\widetilde{y}_j[0]$ and as a result the privacy of nodes $v_j$ and $v_l$ is preserved. 
\end{list5}
From these discussions, we can deduce that, in order for node $v_j$ to preserve the privacy of its initial state, it is sufficient that $v_j \in \mathcal{V}_p$ and there exists at least one non-curious out-neighbor $v_l \notin \mathcal{V}_c$. 
Note that in such cases, the initial states of both nodes are protected (though the sum of these states may be exposed). 
As a result, the initial state of the node that follows the protocol cannot be inferred exactly, which according to Definition~\ref{Definition_Quant_Privacy} implies that its privacy is preserved.
\end{proof}

\begin{remark}
Note that the topological conditions of Algorithm~\ref{algorithm2} are improved with respect to those of Algorithm~\ref{algorithm1}. 
While the topological conditions of Algorithm~\ref{algorithm1} require two connected nodes to follow the privacy-preserving protocol or an in-neighbor to explicitly transmit to the privacy-preserving node during its initialization, the topological conditions of Algorithm~\ref{algorithm2} do not necessarily need a node that follows the protocol or has side information. 
Also, apart from improving the topological conditions that ensure privacy for the nodes following the proposed protocol, Algorithm~\ref{algorithm2} also exhibits increased convergence speed (seen the subsequent discussion in Section~\ref{sec:results}), since the injection of the zero-valued offset requires only one time step during the initialization procedure.
\end{remark}

%
%
%
%

\section{Simulation Results} \label{sec:results}

In this section, we present simulation results to illustrate the behavior of our proposed distributed protocols. 
Specifically, we analyze the cases of: 
\begin{itemize}
\item[A)] a {randomly generated} digraph of $20$ nodes with the average of the initial states of the nodes turning out to be equal to $q = {181}/{20} = 9.05$, 
\item[B)] $1000$ randomly generated digraphs of $20$ nodes each where, for convenience in plotting the results, the initial quantized state of each node remained the same (for each one of the $1000$ randomly generated digraphs); this means that the average of the nodes' initial quantized states also remained equal to $q = {185}/{20} = 9.25$. 
\end{itemize}
For each of the above cases we analyze the scenarios where each node $v_j \in \mathcal{V}$ does the following: Case (i) executes the privacy protocol described in Algorithm~\ref{algorithm1} and initially infuses in the network a {randomly chosen} offset $u_j \in [-100, -50]$ with {randomly chosen} offset adding steps $L_j \in [20, 40]$, Case (ii) executes the privacy protocol described in Algorithm~\ref{algorithm2} and initially infuses in the network the randomly chosen offset $u_j \in [-100, 100]$, the randomly chosen offsets $u^{(l)}_j \in [-20, 20]$, for every $v_l \in \mathcal{N}_j^+$, and $u_j$ chosen according to \eqref{Offset_value_2b} and Case (iii) initially does not infuse any offset in the network, i.e., $u_j = 0$ and $u^{(l)}_j= 0$ for every $v_l \in \mathcal{N}_j^+$, which means that it does not attempt to preserve the privacy of its initial state. 
Note that the digraphs were randomly generated by creating, independently for each ordered pair $(v_j,v_i)$ of two nodes $v_j$ and $v_i$ ($v_j \neq v_i$), a directed edge from node $v_i$ to node $v_j$ with probability $p = 0.3$.

\subsection{Execution of Algorithm~\ref{algorithm1} and Algorithm~\ref{algorithm2} over a Random Digraph of $20$ Nodes}

In Fig.~\ref{Journal_9_05_20Rand1Graph}, we illustrate Algorithm~\ref{algorithm1} and Algorithm~\ref{algorithm2} over a random digraph of $20$ nodes, where the average of the initial states of the nodes is equal to $q = {181}/{20} = 9.05$. 
We analyze the operation of our algorithms for the scenarios where each node $v_j \in \mathcal{V}$:	Case (i) executes Algorithm~\ref{algorithm1} and initially infuses in the network an offset $u_j \in [-100, -50]$ with offset adding steps $L_j \in [20, 40]$ (see top of Fig.~\ref{Journal_9_05_20Rand1Graph}), Case (ii) executes Algorithm~\ref{algorithm2} and initially infuses in the network the randomly chosen offset $u_j \in [-100, 100]$ and the offsets $u^{(l)}_j \in [-20, 20]$, for every $v_l \in \mathcal{N}_j^+$ (see middle of Fig.~\ref{Journal_9_05_20Rand1Graph}), and Case (iii) initially does not infuse any offset in the network $u_j = 0$, $u^{(l)}_j = 0$, for every $v_l \in \mathcal{N}_j^+$ (see bottom of Fig.~\ref{Journal_9_05_20Rand1Graph}). 
Here, we observe that for Case (iii) (where each node $v_j$ does not attempt to preserve the privacy of its initial state) each node is able to calculate the average of the initial states after $100$ time steps. 
However, for the case where each node $v_j$ wants to preserve the privacy of its initial state we observe that Algorithm~\ref{algorithm1} converges after $450$ time steps, while Algorithm~\ref{algorithm2} converges after $105$ time steps. 
Furthermore, we observe that for Case (iii) we have $y_j[0] \in [3, 19]$ for every $v_j \in \mathcal{V}$, for the case where every node executes Algorithm~\ref{algorithm1} we have $\widetilde{y}_j[0] \in [-40, -80]$ for every $v_j \in \mathcal{V}$, and for the case where every node executes Algorithm~\ref{algorithm2} we have $\widetilde{y}_j[0] \in [-80, 90]$ for every $v_j \in \mathcal{V}$.
Note here that both algorithms are able to calculate the \textit{exact} average of the initial states of the nodes without introducing any error due to the utilized privacy-preserving strategy.

\begin{figure}[t]
\begin{center}
\includegraphics[width=.8\columnwidth]{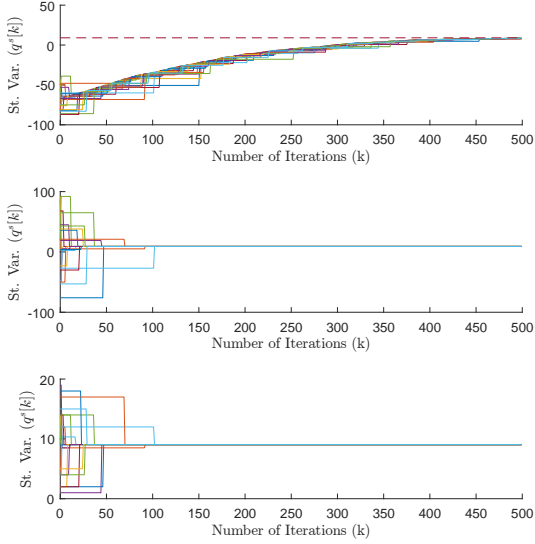}
\caption{Execution of Algorithm~\ref{algorithm1} and Algorithm~\ref{algorithm2} for a random digraph of $20$ nodes. 
\textit{Top figure:} Node state variables with privacy preservation of Algorithm~\ref{algorithm1} plotted against the number of iterations where the dashed line is the average of the initial states.
\textit{Middle figure:} Node state variables with privacy preservation of Algorithm~\ref{algorithm2} plotted against the number of iterations.
\textit{Bottom Figure:} Node state variables without privacy preservation plotted against the number of iterations.\vspace{-0.2cm}}
\label{Journal_9_05_20Rand1Graph}
\end{center}
\end{figure}

\subsection{Execution of Algorithm~\ref{algorithm1} and Algorithm~\ref{algorithm2} Averaged over $1000$ Random Digraphs of $20$ Nodes}

In Fig.~\ref{Journal_9_25_1000Graphs_20Nodes} we present the same cases as in Fig.~\ref{Journal_9_05_20Rand1Graph} with the difference being that they are averaged over $1000$ randomly generated digraphs of $20$ nodes. 
The initial quantized state of each node remained the same for each one of the $1000$ randomly generated digraphs (in particular the average of the initial states of the nodes is equal to $q = {185}/{20} = 9.25$). 
For every node $v_j$, the initial offset $u_j \in [-100, -50]$ and the offset adding steps $L_j \in [20, 40]$ during the execution of Algorithm~\ref{algorithm1}, as well as the initial offset $u_j \in [-100, 100]$ and the offsets $u^{(l)}_j \in [-20, 20]$, for every $v_l \in \mathcal{N}_j^+$, during the execution of Algorithm~\ref{algorithm2}, were randomly chosen for each digraph according to a uniform distribution. 
We can see that the main results resemble those in Fig.~\ref{Journal_9_05_20Rand1Graph}, and Algorithm~\ref{algorithm1} converges after $450$ time steps while Algorithm~\ref{algorithm2} converges after $170$ time steps. 
For the Case (iii) we have $y_j[0] \in [4, 19]$ for every $v_j \in \mathcal{V}$, whereas for the case where every node executes Algorithm~\ref{algorithm1} we have $\widetilde{y}_j[0] \in [-40, -70]$ for every $v_j \in \mathcal{V}$, and for the case where every node executes Algorithm~\ref{algorithm2} we have $\widetilde{y}_j[0] \in [3, 20]$ for every $v_j \in \mathcal{V}$.
This means that, for Case (iii), the states $y_j[0]$ of every $v_j \in \mathcal{V}$ are almost equal to the states $\widetilde{y}_j[0]$ for the case where every node executes Algorithm~\ref{algorithm2}, since the offsets $u_j \in [-100, 100]$, $u^{(l)}_j \in [-20, 20]$ for every $v_l \in \mathcal{N}_j^+$ were randomly chosen for each digraph according to a uniform distribution.

\begin{figure}[t]
\begin{center}
\includegraphics[width=.85\columnwidth]{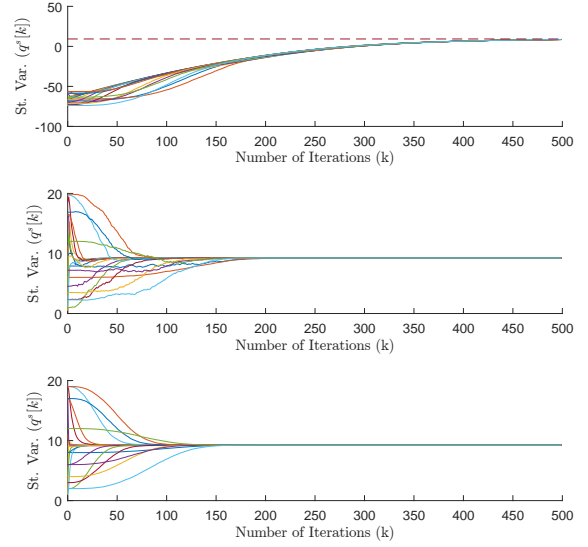}
\caption{Execution of Algorithm~\ref{algorithm1} and Algorithm~\ref{algorithm2} averaged over $1000$ random digraphs of $20$ nodes.  
\textit{Top figure:} Average values of node state variables with privacy preservation of Algorithm~\ref{algorithm1} plotted against the number of iterations (averaged over $1000$ random digraphs of $20$ nodes) where the dashed line is the average of the initial states.
\textit{Middle figure:} Average values of node state variables with privacy preservation of Algorithm~\ref{algorithm2} plotted against the number of iterations (averaged over $1000$ random digraphs of $20$ nodes). 
\textit{Bottom Figure:} Average values of node state variables without privacy preservation plotted against the number of iterations (averaged over $1000$ random digraphs of $20$ nodes).\vspace{-0.2cm}}
\label{Journal_9_25_1000Graphs_20Nodes}
\end{center}
\end{figure}

\begin{remark}
In Fig.~\ref{Journal_9_05_20Rand1Graph} and Fig.~\ref{Journal_9_25_1000Graphs_20Nodes} we can see that both algorithms are able to calculate the exact average of the initial states of the nodes without introducing any error due to the utilized privacy preserving strategies. 
This makes Algorithm~\ref{algorithm1} and Algorithm~\ref{algorithm2} the first algorithms in the literature which calculate the exact average of the initial states of the nodes in finite time without introducing any error in a privacy-preserving manner.
Furthermore, the privacy strategy presented in Algorithm~\ref{algorithm2} requires fewer time steps for convergence than the strategy in Algorithm~\ref{algorithm1} since the injection of the zero-valued offset is done during the initialization procedure and requires only one time step. 
However, Algorithm~\ref{algorithm1} allows a more efficient usage of the available network resources (e.g., communication bandwidth) since each node is required to transmit to at most one out-neighbor at each time step $k$. 
\end{remark}


\vspace{-.5cm}

%
%
%
%

\section{Conclusions and Future Directions}\label{sec:conclusions}


In this paper, we proposed two event-triggered quantized privacy-preserving strategies which allow the nodes of a multi-agent system to calculate the average of their initial states using quantized states and after a finite number of time steps without revealing their initial state to other nodes. 
They take full advantage of the algorithm's finite time nature which means that consensus to the \textit{exact} average of the initial states is achieved after a finite number of iterations which we explicitly calculated, while the error, introduced from the offset initially infused in the network by the nodes following the protocol, vanishes completely. 
Finally, we have demonstrated the performance of our proposed protocols via illustrative examples and we have presented an application in smart grids. 



The point-to-point communication protocol and the quantized nature of the packets used in the proposed algorithms facilitate the use of cryptographic primitives for setting up secure channels and preventing eavesdropping, while harvesting the benefits of event-triggered and finite-time operation of the distributed privacy-preserving protocol proposed. 
We plan to deploy such cryptographic protocols and implement our algorithms in practice. 

\bibliographystyle{IEEEtran}
\bibliography{bibliografia_consensus}

\appendices
\section{Application: Power Request in Smart Grids under Privacy-Preserving Guarantees}\label{app_analysis}

In this application, by applying either Algorithm~\ref{algorithm1} or Algorithm~\ref{algorithm2} in the network shown in Fig.~\ref{app_fig}, we are able compute distributively and in a privacy-preserving manner the \emph{total} power requested within a certain time period by a set of interconnected nodes (be it households, electric cars, etc.) in a neighborhood. 
To formally define the privacy-preserving total power request of a set of interconnected nodes in a neighborhood, let $\mathcal{B} = \{ b_1, b_2, \ldots, b_n \}$ denote the set of $n$ interconnected nodes in neighborhood $\mathcal{B}$, and $\mathcal{C}^{\day} = \{ c_1^{\day}, c_2^{\day}, \ldots, c_n^{\day} \}$, denote the set of requested powers per household at each day within a month, where $\day$ refers to the day of the month. 
This means that for node $b_j$ the amount of requested power at the third day of a month is denoted as $c_j^{3}$. For simplicity of exposition, we show how our algorithms work within a single day, so hereafter we drop the index $\day$.

Both algorithms initially use as input the requested power of each node $b_j \in \mathcal{B}$ for a specific day $c_j$ and create a distorted version $\widetilde{c}_j$.  Let $\widetilde{\mathcal{C}} = \{ \widetilde{c}_1, \widetilde{c}_2, \ldots, \widetilde{c}_n \}$, be the set of distorted amounts of requested power from every household in the neighborhood at each day within a month. 
Eventually, the smart meter collects the calculated average demand of the neighborhood, in order to multiply it with the number of participating nodes (there exist algorithms for computing the total number of nodes, in case it can vary e.g., due to the presence of excess electric vehicles) and calculate the total demanded power.
The operation of both algorithms is the following: 

\noindent 
A. During the operation of Algorithm~\ref{algorithm1} a set of offset adding steps $L_j$ is chosen from each node $b_j$.  
Then, each node injects a set of positive offsets for a number of $L_{\max} = \max_{b_j \in \mathcal{B}} L_j$ steps, which will guarantee the preservation of the privacy of the people living in this household. 

\noindent
B. During the operation of Algorithm~\ref{algorithm2}, each node $b_j$ transmits nonzero offsets to its out-neighbors. 
Then, for each day the initial states $\widetilde{c}_j$ of every household $b_j$ are calculated (note that it holds $\sum_{j=1}^{n} \widetilde{c}_j = \sum_{j=1}^{n} c_j$). 
This means that the privacy of the data containing the daily requested power of each household is preserved. 

For the households in Fig.~\ref{app_fig} let us consider the set $\mathcal{C} = \{ 30, 35, 28, 34, 27, 37, 29, 32 \}$, which denotes the amount of requested power at a certain day from each household (i.e., household $b_1$ requests $30$, $b_2$ requests $35$, etc.) with the average demand being $\overline{c}^{\day} = 31.5$. 
During the execution of Algorithm~\ref{algorithm1} we have 
$\widetilde{\mathcal{C}} = \{ 15, 16, 15, 17, 15, 17, 15, 16 \}$, while during the execution of Algorithm~\ref{algorithm2} we have 
$\widetilde{\mathcal{C}} = \{ 28, 30, 25, 32, 36, 34, 33, 34 \}$. 
Note here that the daily demands $\widetilde{\mathcal{C}}^{\day}$ for Algorithm~\ref{algorithm1} are different than the daily demands $\widetilde{\mathcal{C}}^{\day}$ for Algorithm~\ref{algorithm2}, due to the choice of $u^{\day}_j$ and $u_j^{l}$, $\forall v_l \in \mathcal{N}^+_j$, respectively. 
In Fig.~\ref{app_plot2} we can see that this choice affects the convergence rate of both algorithms with Algorithm~\ref{algorithm2} generally requiring less time steps to converge compared to Algorithm~\ref{algorithm1} (as already discussed in Section~\ref{sec:results}). 
However, note that from Fig.~\ref{Journal_9_25_1000Graphs_20Nodes} we can see that Algorithm~\ref{algorithm2}, compared to Algorithm~\ref{algorithm1}, requires less time steps to converge also for the average case of $1000$ random digraphs of $20$ nodes due to its privacy strategy which is implemented only during the initialization steps. 

\begin{figure}[t]
\begin{center}
\includegraphics[width=.85\columnwidth]{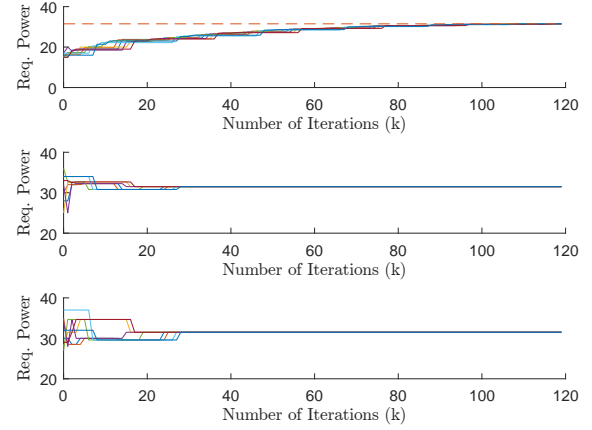}
\caption{Execution of Algorithm~\ref{algorithm1} and Algorithm~\ref{algorithm2} for the neighborhood of $8$ households shown in Fig.~\ref{app_fig}. 
\textit{Top figure:} Requested power per household with privacy preservation of Algorithm~\ref{algorithm1} plotted against the number of iterations for $day=1$, where the dashed line is the average of the initial states.
\textit{Middle figure:} Requested power per household with privacy preservation of Algorithm~\ref{algorithm2} plotted against the number of iterations for $day=1$. 
\textit{Bottom Figure:} Requested power per household without privacy preservation plotted against the number of iterations for $day=1$.\vspace{-0.2cm}}
\label{app_plot2}
\end{center}
\end{figure}


\vspace{-.3cm}

%
%
\section{Proof of Theorem~\ref{Privacy_Conver_Quant_Av}} \label{proof:theorem2}

Let us assume that, at time step $k_0$, the mass variables of node $v_j$ are the ``leading mass'' and there exists a set of nodes $\mathcal{V}^f [k_0] \subseteq \mathcal{V}$ which is defined as $\mathcal{V}^f [k_0] = \{ v_i \in \mathcal{V} \ | \ z_i[k_0] > 0 \ \text{but} \ (\ref{great_z_prop1_det_1}) \ \text{or} \ (\ref{great_z_prop2_det_2}) \ \text{do not hold} \}$ (i.e., it is the set of nodes which have nonzero mass variables at time step $k_0$ but they are not ``leading masses'').
Note here that if the ``leading mass'' reaches a node simultaneously with some other (leading or otherwise) mass then it gets ``merged'', i.e., the receiving node ``merges'' the mass variables it receives, by summing their numerators and their denominators, creating a set of mass variables with a greater denominator. 
Furthermore, we will say that the ``leading mass'', gets ``obstructed'' if it reaches a node whose state variables are greater than the mass variables (i.e., either the denominator of the node's state variables is greater than the denominator of the mass variables, or, if the denominators are equal, the numerator of the state variables is greater than the numerator of the mass variables). 
Note that if the ``leading mass'' we started off with at time step $k_0$ gets ``obstructed'' then its no longer the ``leading mass'', since it will not fulfill the event-triggered conditions of the corresponding node (from Lemma~\ref{first_lemma} we have that the ``leading mass'' always fulfills the event-triggered conditions). 

Suppose that the ``leading mass'' at time step $k_0$ is held by node $v_j$ and is given by $\widetilde{y}_j[k_0]$ and $z_j[k_0]$. 
Since it does not get merged or obstructed, during the execution of Algorithm~\ref{algorithm1}, it will reach every node $v_j \in \mathcal{V}$ in at most $m^2$ steps, where $m = | \mathcal{E} |$ is the number of edges of the given digraph $\mathcal{G}_d$ (this follows from Proposition~$3$ in \cite{2014:RikosHadj}, which actually provides a bound for an unobstructed ``leading mass'' to reach every other node). 
Even if the mass gets obstructed at some node, this means that it is no longer the ``leading mass''; in fact, the new ``leading mass'' passed by this node earlier on and has already followed the same path that the former ``leading mass'' would have followed. 
Let us assume now that we execute Algorithm~\ref{algorithm1} for $m^2(L_{\max}+1)$ time steps, where $L_{\max} = \max_{v_j \in \mathcal{V}} L_j$.
During the $m^2(L_{\max}+1)$ time steps each node $v_j$ will receive at least $(L_{\max}+1)$ times a set of nonzero mass variables from its in-neighbors that are equal to the ``leading mass''. 
Since this set of mass variables is equal to the ``leading mass'', from Lemma~\ref{first_lemma}, we have that the event-triggered conditions of each node $v_j$ will hold for at least $(L_{\max}+1)$ events during these $m^2(L_{\max}+1)$ time steps. 
This means that each node $v_j$ adds the offset $u_j[l_j]$ to its mass variables for $(L_{\max}+1)$ events. 
Since we have that $\sum_{l_j=0}^{L_j} u_j[l_j] = -u_j$, from Lemma~\ref{second_lemma}, after $m^2(L_{\max}+1)$ time steps, the accumulated amount of offset each node $v_j$ has injected to the network becomes equal to zero. 
As a result we have that  $\sum_{v_j \in \mathcal{V}} \widetilde{y}_j[m^2(L_{\max}+1)] = \sum_{v_j \in \mathcal{V}} y_j[0]$
and $ \sum_{v_j \in \mathcal{V}} z_j[m^2(L_{\max}+1)] = \sum_{v_j \in \mathcal{V}} z_j[0]$.
After executing Algorithm~\ref{algorithm1} for an additional number of $nm^2$ time steps, we have that the convergence analysis of our protocol becomes identical to the analysis presented in \cite[Proposition~$1$]{2020:Rikos_Mass_Accum} where during the additional $nm^2$ we have that either (a) the ``leading masses'' never merge (because they all move simultaneously) or (b) there are at most $n-1$ ``mergings'' of mass variables (each merging occurring after at most $m^2$ time steps). 
As a result, after $\mathcal{S}_t$ iterations, where {$\mathcal{S}_t \leq m^2(L_{\max} + 1 + n)$}, we are guaranteed that the offset each node has injected into the network becomes equal to zero, and sufficient ``mergings'' (at most $n-1$) occurred, so that nodes will be able to calculate the average of their states. $\hfill$ $\qed$

%
%
\section{Proof of Proposition~\ref{prop:1}} \label{proof:prop1}

Let us assume that node $v_j$ follows the privacy-preserving protocol. 
We consider the following simple scenarios, which constitute the building blocks of the directed network, due to the token-based nature of the privacy-preserving protocol. 
\begin{list5}
\item[1)] It is easy to observe that if all the in- and out-neighbors of node $v_j$ are curious and they communicate with each other, it is not possible for this node to keep its privacy. 
At initialization, the curious nodes will know $\widetilde{y}_j[0]$. At every step, they will know what node $v_j$ has received and they will be able to extract the offset added. Hence, after $(L_j +1)$ updates from node $v_j$, the curious nodes will be able to compute the initial offset, since the initial offset satisfies \eqref{Offset_value_1e}; hence, privacy of the initial state is not preserved. 
As a result, at least one neighbor that is not curious is needed.
\item[2)] Suppose that there exists at least one in-neighbor, say $v_i$, that is not curious, but it does not follow the privacy-preserving protocol; all other in- and out-neighbors of node $v_j$ and node $v_i$ are assumed (as a worst-case assumption) to be curious. 
If at initialization, node $v_i$ first transmits to node $v_j$, then the curious nodes will not be able to infer its initial state, since they cannot distinguish the initial states of $v_i$ and $v_j$; even if in the end the curious nodes are able to obtain all the offsets, they can only infer the sum of the initial states of node $v_j$ and node $v_i$, but not their individual states, i.e., \emph{both} nodes preserve their privacy. 
If, however, $v_j$ is not contacted by node $v_i$ at initialization, then the curious nodes will be able to extract the initial condition and know all the inputs (and hence outputs) of node $v_i$ (recall that node $v_i$ does not follow the privacy-preserving protocol). 
Thus, curious nodes will be able to infer all the states that node $v_i$ transmits to $v_j$ and, as a consequence, \emph{none} of the nodes will preserve its privacy. This suggests that, if a node $v_i$ trusts that an out-neighbor $v_j$ is not curious and follows the privacy-preserving protocol, then this out-neighbor should be prioritized, i.e.,  $P_{ji}=0$.
\item[3)] Let us consider the case for which there exists one out-neighbor of node $v_j$, say $v_l$, that is neither curious nor following the privacy-preserving protocol, and all other in- and out-neighbors of both nodes are curious. 
Since curious nodes can infer the input of node $v_l$ from its output, then they will be able to extract the messages of node $v_j$ in the same way as if a curious node was directly connected to node $v_j$. Hence, privacy of node $v_j$ cannot be preserved.
\item[4)] If we assume that there exists at least one in-neighbor, say $v_{\ell}$, that follows the privacy-preserving protocol, the curious nodes will not be able to infer the initial state of node $v_j$ due to the offsets node $v_{\ell}$ transmits to $v_j$ and, therefore, both $v_{\ell}$ and $v_j$ retain their privacy.
\end{list5}
From these discussions, we can deduce that it is \emph{sufficient} that conditions a) and b) in Proposition~\ref{prop:1} are satisfied. 
In such cases, the initial states of both nodes are protected (though the sum of these states may be exposed).
This means that the initial state of the node that follows the protocol cannot be inferred exactly, which according to Definition~\ref{Definition_Quant_Privacy} implies that its privacy is preserved. 
Note that it is also sufficient if a node that does not follow the privacy-preserving protocol first selects an out-neighbor that does follow the protocol to transmit its state; see item 5) in the initialization stage of Algorithm~\ref{algorithm1}.  $\hfill$ $\qed$

%
%
\section{Proof of Theorem~\ref{Privacy_Conver_Quant_Av_Alg2}}\label{proof:theorem3}

During the Initialization steps of Algorithm~\ref{algorithm2}, we have that each node $v_j \in \mathcal{V}_p$ chooses an integer value $u^{(l)}_j \in \mathbb{Z}$ for every out-neighbor $v_l \in \mathcal{N}_j^+$, respectively, while it sets $u_j = - \sum_{v_l \in \mathcal{N}_j^+} u^{(l)}_j$. 
Then, it transmits the chosen integer value $u^{(l)}_j$ to every out-neighbor $v_l \in \mathcal{N}_j^+$. 
Finally, it sets its initial state as 
$ 
\widetilde{y}_j[0] = y_j[0] + u_j + \sum_{v_i \in \mathcal{N}_j^-} u^{(j)}_i, 
$
and proceeds with executing the protocol described in Section~\ref{Prel_Aver}. 
Focusing on $\widetilde{y}_j[0]$ we have that 
\begin{equation}\label{init_decompose1}
\sum_{v_j \in \mathcal{V}} \widetilde{y}_j[0] = \sum_{v_j \in \mathcal{V}} \bigl ( y_j[0] + u_j + \sum_{v_i \in \mathcal{N}_j^-} u^{(j)}_i \bigr ).  
\end{equation}
From \eqref{init_decompose1}, we have  
\begin{eqnarray}\label{init_decompose2}
\sum_{v_j \in \mathcal{V}} \bigl ( y_j[0] + u_j + \sum_{v_i \in \mathcal{N}_j^-} u^{(j)}_i \bigr ) & = \nonumber \\
\sum_{v_j \in \mathcal{V}} y_j[0] + \sum_{v_j \in \mathcal{V}} u_j + \sum_{v_j \in \mathcal{V}} \bigl ( \sum_{v_i \in \mathcal{N}_j^-} u^{(j)}_i \bigr ) &  .  \;
\end{eqnarray}
Analyzing the second part of \eqref{init_decompose2} we have that from \eqref{Offset_value_2a}, \eqref{Offset_value_2b} it holds: 
\begin{equation}\label{init_decompose3}
\sum_{v_j \in \mathcal{V}} u_j = - \sum_{v_j \in \mathcal{V}} \bigl ( \sum_{v_i \in \mathcal{N}_j^-} u^{(j)}_i \bigr ) .   
\end{equation}
From \eqref{init_decompose3}, we have that \eqref{init_decompose2} becomes 
\begin{equation}\label{init_decompose5}
\sum_{v_j \in \mathcal{V}} \bigl ( y_j[0] + u_j + \sum_{v_i \in \mathcal{N}_j^-} u^{(j)}_i \bigr ) = \sum_{v_j \in \mathcal{V}} y_j[0] .  
\end{equation}
As a result, from \eqref{init_decompose1} and \eqref{init_decompose5} we have, 
$
\sum_{v_j \in \mathcal{V}} \widetilde{y}_j[0] = \sum_{v_j \in \mathcal{V}} y_j[0] ,  
$
which means that the Initialization Steps of Algorithm~\ref{algorithm2} not only preserve the privacy of each node's initial state, i.e., each node's $v_j$ initial state becomes $\widetilde{y}_j[0]$ instead of $y_j[0]$, but also preserve the sum of the initial states. 
This means that, during the Initialization Steps of Algorithm~\ref{algorithm2} each node obtains the state $\widetilde{y}_j[0]$ and during the Iteration Steps, the protocol described in Section~\ref{Prel_Aver} is executed. 
As a result, from Theorem~\ref{Conver_Quant_Av}, Algorithm~\ref{algorithm2} allows each node $v_j \in \mathcal{V}$ to reach quantized average consensus (i.e., $v_j$'s state variables fulfil (\ref{alpha_z_y}) and (\ref{alpha_q})) after a finite number of steps $\mathcal{S}_t$, bounded by $\mathcal{S}_t \leq nm^2$, where $n$ is the number of nodes and $m$ is the number of edges in the network.   $\hfill$ $\qed$

\end{document}